\newtheorem{Remark}{Remark}
\newtheorem{proposition}{Proposition}
\DeclareMathOperator*{\argmax}{arg\,max}
\newcommand{\ve}[1]{\boldsymbol{#1}}
\newcommand{\vH}{\ve{H}} \newcommand{\vh}{\ve{h}}
 \newcommand{\vw}{\ve{w}}
\newcommand{\be}{\begin{equation}} \newcommand{\ee}{\end{equation}}
\newcommand{\bea}{\begin{eqnarray}} \newcommand{\eea}{\end{eqnarray}}
\newcommand{\qa}{{\bf a}}
\newcommand{\qh}{{\bf h}}
\newcommand{\qn}{{\bf n}}
\newcommand{\qr}{{\bf r}}
\newcommand{\qu}{{\bf u}}
\newcommand{\qw}{{\bf w}}
\newcommand{\qx}{{\bf x}}
\newcommand{\qA}{{\bf A}}
\newcommand{\qB}{{\bf B}}
\newcommand{\qD}{{\bf D}}
\newcommand{\qH}{{\bf H}}
\newcommand{\qI}{{\bf I}}
\newcommand{\qR}{{\bf R}}
\newcommand{\qW}{{\bf W}}
\newcommand{\qY}{{\bf Y}}
\newcommand{\qZ}{{\bf Z}}
\newcommand{\hRD}{\vh_{RD}}
\newcommand{\hSR}{\vh_{SR}}
\newcommand{\thSR}{\tilde{\vh}_{SR}}
\newcommand{\thRD}{\tilde{\vh}_{RD}}
\newcommand{\MRT}{\mathsf{MRT}}
\newcommand{\MRC}{\mathsf{MRC}}
\newcommand{\TZF}{\mathsf{TZF}}
\newcommand{\RZF}{\mathsf{RZF}}
\newcommand{\ZF}{\mathsf{ZF}}
\newcommand{\mSINR}{\mathsf{mSINR}}
\newcommand{\gamth}{\gamma_{\mathsf{th}}}
\newcommand{\Pout}{\mathsf{P_{out}}}
\newcommand{\WT}{\vw_t}
\newcommand{\diag}{\mathsf{diag}}
\newcommand{\Bta}{\mathsf{Beta}}
\newcommand{\SnR}{\sigma_R^2}
\newcommand{\SnD}{\sigma_{D}^2}
\newcommand{\Srd}{d_2^{\tau}}
\newcommand{\Ssr}{d_1^{\tau}}
\newcommand{\bGsd}{ \frac{d_1^{\tau} z}{ \rho_1}}
\newcommand{\Sap}{\sigma_{RR}^2}
\newcommand{\Prob}{\textnormal{Pr}}
\newcites{Prim}{Very important papers}
\definecolor{light-gray}{gray}{0.65}
\newcounter{mytempeqcounter}
\newcommand{\bigformulatop}[2]{%
  \begin{figure*}[!t]
    \normalsize
    \setcounter{mytempeqcounter}{\value{equation}}
    \setcounter{equation}{#1}
    #2

    \setcounter{equation}{\value{mytempeqcounter}}
    \hrulefill
    \vspace*{4pt}
  \end{figure*}
}
\title{Throughput Analysis and Optimization of Wireless-Powered Multiple Antenna Full-Duplex Relay Systems}
\author{\normalsize {Mohammadali Mohammadi,~\IEEEmembership{Member,~IEEE,}
Batu K. Chalise,~\IEEEmembership{Senior Member,~IEEE,}\\
 Himal A. Suraweera,~\IEEEmembership{Senior Member,~IEEE,}
 Caijun Zhong,~\IEEEmembership{Senior Member,~IEEE,}\\
 \hspace{2.5em}Gan Zheng,~\IEEEmembership{Senior Member,~IEEE,}
 and Ioannis Krikidis,~\IEEEmembership{Senior Member,~IEEE}}
  \thanks{
    Mohammadali Mohammadi is with the Faculty of Engineering, Shahrekord University, Shahrekord 115, Iran
    (email: m.a.mohammadi@eng.sku.ac.ir).}
    \thanks{
    Batu K. Chalise is with Cleveland State University, 2121 Euclid Avenue, Cleveland, OH 44115 (email: b.chalise@csuohio.edu). }
  \thanks{
    Himal A. Suraweera is with the Department of Electrical and Electronic
Engineering, University of Peradeniya, Peradeniya 20400, Sri Lanka  (email:
    himal@ee.pdn.ac.lk). }
      \thanks{
    Caijun Zhong is with the Department of Information Science and Electronic
Engineering, Zhejiang University, Hangzhou 310027, China (email:  caijunzhong@zju.edu.cn). }
      \thanks{
   Gan Zheng is with School of Computer Science and Electronic Engineering, University of Essex, UK (email: ganzheng@essex.ac.uk). }
    \thanks{I. Krikidis is with the Department of Electrical and Computer
Engineering, University of Cyprus, Nicosia 1678, Cyprus (email:
    krikidis@ucy.ac.cy).}
      \thanks{
This work was presented in part at the IEEE International Workshop on Signal Processing Advances in Wireless Communications (SPAWC 2015), Stockholm, Sweden, June/July 2015.
}
}
\begin{document}

\maketitle
\thispagestyle{empty}

\begin{abstract}
We consider a full-duplex (FD) decode-and-forward system in which the time-switching protocol is employed by the multi-antenna relay to receive energy from the source and transmit information to the destination. The instantaneous throughput is maximized by optimizing receive and transmit beamformers at the relay and the time-split parameter. We study both optimum and suboptimum schemes. The reformulated problem in the optimum scheme achieves closed-form solutions in terms of transmit beamformer for some scenarios. In other scenarios, the optimization problem is formulated as a semi-definite relaxation problem and a rank-one optimum solution is always guaranteed. In the suboptimum schemes, the beamformers are obtained using maximum ratio combining, zero-forcing, and maximum ratio transmission. When beamformers have closed-form solutions, the achievable instantaneous and delay-constrained throughput are analytically characterized. Our results reveal that, beamforming increases both the energy harvesting and loop interference suppression capabilities at the FD relay. Moreover, simulation results demonstrate that the choice of the linear processing scheme as well as the time-split plays a critical
role in determining the FD gains.
\end{abstract}

\begin{keywords}
  Full-duplex, wireless power transfer, decode-and-forward relay, throughput, outage probability.
\end{keywords}
\section{Introduction}\label{GENERAL_Introduction}
The emergence of multimedia rich wireless services coupled with ever growing number of subscribers has placed a high demand for radio resources such as bandwidth and energy. Most wireless radios so far have adopted half-duplex (HD) operation where uplink and downlink communication are ``orthogonalized'' in either time or frequency domain, which leads to a loss of spectral efficiency. An attractive solution to improve the spectral efficiency is to allow full-duplex (FD) simultaneous transmission/reception at the expense of loopback interference (LI) caused by the signal leakage from the transceiver output to the input~\cite{Sabharwal:JSac:2014,Margetts:JSac:2014,Riihonen:TSP:2011,Riihonen:TWC:2011,Duarte:Thesis:2012}.

Traditionally, LI suppression has been performed using passive isolation techniques such as placing RF absorber material between antennas, deploying of directional antennas.~\cite{Korpi:TWC:2014}. These schemes alone are inadequate to suppress the LI below the noise floor level required in most wireless systems. To achieve more effective suppression, a FD node could apply time-domain active techniques to pre-cancel the radio frequency (RF) (analog domain) or baseband (digital domain) LI signal~\cite{Duarte:Thesis:2012,Aryafar:2012}. However, such mitigation schemes require sophisticated electronic implementation~\cite{Riihonen:TSP:2011}. With the ubiquitous use of multi-antenna wireless systems, spatial domain precoding techniques can also be deployed at MIMO FD nodes. Such techniques have received significant interest as an attractive FD solution~\cite{Suraweera:TWC:2014,Ngo:JSAC:2014}.

On the other hand, many contemporary communication systems are battery powered and have a limited operational lifetime. To this end energy harvesting communications is a new paradigm that can power wireless devices by scavenging energy from external resources such as solar, wind, ambient RF power etc.~\cite{Mehta:TWC:2010}. However, energy harvesting from such sources are not without challenges due to the unpredictable nature of these energy sources. To this end, wireless energy transfer has been touted as a promising technique for a variety of wireless applications~\cite{Lumpkins:Mag:2014,Huang:COMM:2015,Rui_Zhang:TWC:2013}.

RF signals can carry both information and energy and this fundamental tradeoff has been studied in~\cite{Varshney:ISIT:2008,Sahai:ISIT:2010}. In order to address practical issues associated with simultaneous information and energy transfer (same signal can not be used for both decoding and rectifying), two practical approaches, i.e., time-switching (TS) and power-splitting (PS) point-to-point system architectures were proposed in~\cite{Zhou:TCOM:2014}. Subsequent works have also considered wireless-powered HD relay transmission with TS and PS architectures, for example, different relay networks have been studied considering amplify-and-forward (AF) and decode-and-forward (DF) relaying~\cite{Nasir:TWC:2013,Nasir:TCOM:2015,YonghuiLi:TSP:2015}, large scale networks~\cite{Krikidis:TCOM:2014} and multiple antenna relay systems~\cite{Krikidis:TCOM:2015,Zhu:TCOM:2015}.

Inspired by the benefits of FD and wireless power transfer, some recent papers have investigated the performance of wireless-powered FD point-to-point~\cite{Yamazaki:WCNC:2015,Ju:TCOM:2014} and relay systems~\cite{Caijun:TCOM:2014,Zeng:WCL:2015}. In~\cite{Ju:TCOM:2014} a wireless network model with a hybrid FD access-point (AP) that broadcasts wireless energy to a set of downlink users and at the same time receives information from the users in the uplink has been considered. In~\cite{Yamazaki:WCNC:2015} software-defined radio implementation of a wireless system that transmits data and power in the same frequency has been presented. In~\cite{Caijun:TCOM:2014}, the achievable throughput of FD AF and DF relaying systems with TS has been studied. In~\cite{Zeng:WCL:2015} the performance of a wireless-powered AF relaying system has been also studied. The protocol in~\cite{Zeng:WCL:2015} considers energy harvesting from LI and therefore can recycle some part of the relay transmit energy. However,~\cite{Caijun:TCOM:2014} and~\cite{Zeng:WCL:2015} only assumed single transmit/receive antennas at the relay.

Inspired by the current work on wireless-powered FD, in this paper we consider a two-hop MIMO relay system where the multiple antenna FD relay is powered via wireless energy transfer from the source. The main motivation for the adoption of multiple antennas at the relay is two-fold: (1) employment of an antenna array helps the relay to accumulate more energy (2) spatial LI cancellation techniques can be deployed. Specifically, we design receive and transmit beamformers at the relay and optimize TS parameter to characterize instantaneous as well as delay-constrained throughputs.  The main contributions of the paper are summarized as follows:
\begin{enumerate}
\item Optimum as well as suboptimum schemes for maximizing the instantaneous throughput are proposed.  In the optimum scheme, the beamformer optimization is reformulated in terms of the transmit beamformer which is shown to have closed-form solutions for some scenarios. In other scenarios where such solutions are not available, we reformulate the optimization problem as a semi-definite relaxation (SDR) problem in terms of
a transmit beamforming matrix. The resulting optimization can be solved as a convex feasibility problem. We prove that the SDR problem either yields optimum rank-one solution or such solution can always be recovered from the  optimum beamforming matrix solution of the SDR problem.

\item In the suboptimum schemes,  we employ zero-forcing (ZF), maximum ratio combining (MRC), and maximum ratio transmission (MRT) schemes for obtaining receive and transmit beamfomers. More specifically, we solve the optimization problems that arise due to the application of  transmit zero-forcing (TZF)/ MRC and MRT/receive zero-forcing (RZF) as transmit/receive beamformers.

 \item In all of the above schemes, the optimum time-split parameter is analytically determined.

\item For the suboptimum schemes, which yield closed-form solutions, we develop new expressions for the system's outage probability. These expressions are helpful for investigating the effects of key system parameters on performance metrics such as the outage probability and delay-constrained throughput.

\item We present simple high signal-to-noise ratio (SNR) expressions for the outage probability of suboptimum schemes which enable the characterization of the system's diversity order and array gain.
\end{enumerate}

The remainder of the paper is organized as follows: Section~\ref{GENERAL_system_model} presents the multiple antenna FD relay system model. Section~\ref{sec:Precoder designs} introduces joint transmit/receive beamforming designs. The instantaneous and delay-constrained throughput of these beamforming schemes are analyzed in Section~\ref{sec:IT} and~\ref{sec:DCT}, respectively. Numerical results are reported in Section~\ref{sec:numerical results}. Finally, Section~\ref{sec:conclusion} concludes the paper and summarizes the key findings.

\emph{Notation:} We use bold upper case letters to denote matrices, bold lower case letters to denote vectors. $\|\cdot\|$, $(\cdot)^{\dag}$, $(\cdot)^{-1}$ and $\mathsf{tr} (\cdot)$  denote the Euclidean norm, conjugate transpose operator, matrix inverse and the trace of a matrix respectively; ${\tt E}\left\{x\right\}$ stands for the expectation of the random variable $x$; $\Prob(\cdot)$ denotes the probability; $f_X(\cdot)$ and $F_X(\cdot)$ denote the probability density function (pdf) and cumulative distribution function (cdf) of the random variable (RV) $X$, respectively; $\mathcal{CN}(\mu,\sigma^2)$ denotes a circularly symmetric complex Gaussian RV $x$ with mean $\mu$ and variance $\sigma^2$;  $\Gamma(a)$ is the Gamma function; $\Gamma(a,x)$ is upper incomplete Gamma function~\cite[Eq. (8.350)]{Integral:Series:Ryzhik:1992}; $K_{\nu}(\cdot)$ is the $\nu$th order modified Bessel function of the second kind~\cite[Eq. (8.432)]{Integral:Series:Ryzhik:1992}; $\psi(x)$ is the digamma function~\cite[Eq. (6.3.1)]{Abramowitz_Handbook_1970}; $E_n(x)$ is the E$_{n}$-function~\cite[Eq. (5.1.4)]{Abramowitz_Handbook_1970} and \small{$G_{p q}^{m n} \left( z \  \vert \  {a_1\cdots a_p \atop b_1\cdots b_q} \right)$ }\normalsize
denotes the Meijer G-function~\cite[ Eq. (9.301)]{Integral:Series:Ryzhik:1992}.
\section{System Model and Problem Formulation}\label{GENERAL_system_model}
We consider a multiple antenna DF relay system consisting of one
source $S$, one relay $R$, and one destination, $D$ as shown in Fig.~\ref{fig: system model}. Both $S$ and $D$ are equipped with a single antenna. To enable FD operation, $R$ is equipped with two sets of antennas, i.e., $M_R$ receiving antennas and $M_T$ transmitting antennas. We assume that the $S$ to $D$ link does not exist.

It is also assumed that $R$ has no external power supply, and is powered through wireless energy transfer from $S$ as in~\cite{Nasir:TWC:2013,Caijun:TCOM:2014}. We adopt the TS protocol~\cite{Zhou:TCOM:2014,Nasir:TWC:2013}, hence the entire communication process is divided into two phases, i.e., for a transmission block time $T$,
$\alpha$ fraction of the block time is devoted for energy harvesting and the remaining time, $(1-\alpha)T$, is used for information transmission. It is also assumed that the channels experience Rayleigh fading and remain constant over the block time $T$ and varies independently and identically from one block to the other.

During the energy harvesting phase, the received signal $\qr_e$ at the relay can be expressed as
\begin{align}
  \qr_e= \sqrt{\frac{P_S}{d_1^{\tau}}} \qh_{SR}x_e + \qn_R,
\end{align}
where $P_S$ is the source transmit power, $d_1$ is the distance between
the source and relay, $\tau$ is the path loss exponent, $\qh_{SR}$ is the $M_R \times 1$ channel vector for the $S$-$R$ link, i.e., input antennas at $R$ are connected to the rectennas,\footnote{Another design choice would be to use all receive and transmit antennas during the energy harvesting phase. Analyzing this case involves different RVs in SINR expressions and left out as future work due to limited space. Moreover, as observed in single/dual antenna relay implementation~\cite{Caijun:TCOM:2014}, ``receive antennas'' versus ``all antennas'' design options are expected to follow similar trends reported in this work and achieve comparable performance depending on the operating SNR regime.} $x_e$ is the energy symbol with unit energy, and $\qn_R$ is the additive white Gaussian
noise (AWGN) at the relay with ${\tt E}\left\{\qn_R \qn_R^{\dag}\right\}=\sigma_{R}^2 \qI_{M_R}$.  As in \cite{Nasir:TWC:2013}, we assume that the energy harvested during the energy harvesting phase is stored in a supercapacitor and then fully consumed by $R$ to forward the source signal to the destination. This type of operation is also known as the ``harvest-use'' architecture in the literature~\cite{YonghuiLi:TSP:2015,Caijun:TCOM:2014}. We assume that the harvested energy due to the noise (including both the antenna noise and the rectifier noise) is small and thus ignored~\cite{Nasir:TWC:2013, Caijun:TCOM:2014,Zhou:2013}. Hence, the relay transmit power can be written as
\be\label{eqn:Pr}
P_r = \frac{\kappa}{d_1^{\tau}} P_S \|\qh_{SR}\|^2,
\ee
where $\kappa\triangleq\frac{\eta\alpha }{1-\alpha}$ and $\eta$ denotes the energy conversion efficiency.
Now, let us consider the information transmission phase. The received signal at $R$ can be expressed as
  \be\label{eqn:rn}
    \qr[n] = \sqrt{\frac{P_S}{d_1^{\tau}}} \qh_{SR}x_S[n] + \qH_{RR} \qx_R[n] + \qn_R[n],
  \ee
where $x_S[n]$ is the source information symbol with unit energy, and $\qx_R[n]$ is the transmitted relay signal satisfying ${\tt E}\left\{\qx_R[n]\qx^{\dag}_R[n]\right\}=P_r$.  In order to reduce the effects of LI on system performance, an imperfect interference cancellation (i.e. analog/digital cancellation) scheme\footnote{Perfect cancellation of LI is not possible due to imperfect estimation of LI channel, inevitable transceiver chain impairments~\cite{Riihonen:TSP:2011,Cirik:2014}, and inherent relay processing delay. Therefore, $\qH_{RR}$ can assume decently high values and its effect can be minimized with spatial suppression techniques. The training-based approach for estimating LI channel can be readily extended to retrieve $\qH_{RR}$ with a reasonably good accuracy. However, simulation results of this paper assume that the estimated and actual channels (including $\qH_{RR}$) are same. As such, the reported results serve as useful theoretical bounds for practical design.} is used at $R$ and we model the $M_R \times M_T$ residual LI channel $\qH_{RR}$ as a fading feedback channel. To this end, several residual LI channel models have been proposed in the literature, see for e.g.,~\cite{Riihonen:TWC:2011,Riihonen:TSP:2011,Duarte:Thesis:2012,Bliss:2012,Cirik:2014}. Since each implementation of a particular analog/digital LI cancellation scheme can be characterized by a specific residual power, the elements of $\qH_{RR}$ can be modeled as independent identically distributed $\mathcal{CN}(0,\Sap)$ RVs, which is a common assumption in the literature since the dominant line-of-sight component in LI can be removed effectively when a cancellation method is implemented~\cite{Korpi:TWC:2014,Duarte:Thesis:2012}.
\begin{figure}[t]
\centering
\includegraphics[width=85mm, height=50mm]{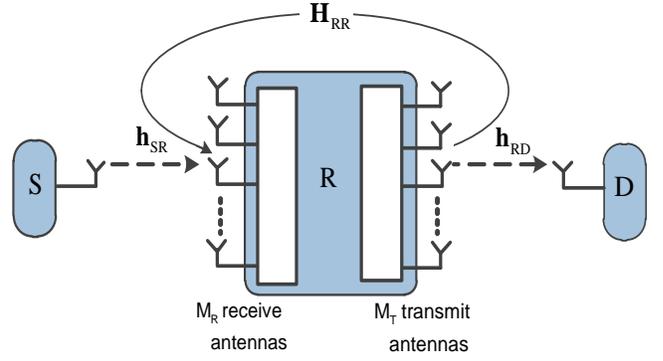}
\vspace{-0.4em}
\caption{Full-duplex relay system model.}
\vspace{-1.0em}
\label{fig: system model}
\vspace{0em}
\end{figure}
Since $R$ adopts the DF protocol, upon receiving the signal, it first applies a linear combining vector $\qw_r$ on $\qr[n]$ to obtain an estimate of $x_S$, then forwards the signal to $D$ using the transmit beamforming vector $\qw_t$. It is assumed that $\|\qw_t\|=\|\qw_r\|=1$.

The relay's estimate $\hat x_S[n]=\qw_r^{\dag} \qr[n]$ can be expressed as
  \begin{align}
   \hat x_S[n] &\!=\!  \sqrt{\frac{P_S}{\Ssr}} \qw_r^{\dag}\qh_{SR}x_S[n] \!+\! \qw_r^{\dag}\qH_{RR} \qx_R[n]\! +\!
   \qw_r^{\dag}\qn_R[n].
  \end{align}

The relay transmit signal is given by~\cite{Riihonen:TSP:2011}
\vspace{-0.1em}
    \be\label{eqn:xn}
    \qx_R[n] = \sqrt{P_r}\qw_t \hat x_S[n-\delta],
  \ee
where $\delta$ accounts for the time delay caused by relay processing. Finally, the received signal at $D$ is
expressed as
\vspace{-0.3em}
\be\label{eqn:ydn}
    y_D[n] = \sqrt{\frac{1}{\Srd}}\qh_{RD}\qx_R[n] + n_D[n].
  \ee
where $\qh_{RD}$ is the $1 \times M_T$ channel vector of the $R-D$ link, $d_2$ is the distance between $R$ and $D$ and $n_D$ denotes the AWGN at $R$ with ${\tt E}\left\{n_Dn_D^{\dag}\right\}=\sigma_{D}^2.$

With the DF protocol, end-to-end SINR can be written as
\vspace{-0.5cm}
\begin{align}
\gamma_{\mathsf{FD}} = &\min\left( \frac{\frac{\rho_1}{\Ssr}|\qw_r^{\dag} \qh_{SR}|^2}{ \frac{\kappa \rho_1}{\Ssr}
\|\qh_{SR}\|^2 |\qw_r^{\dag} \qH_{RR}\qw_t|^2  +
1},\right.\nonumber\\
&\qquad\qquad\qquad\left.\frac{\kappa \rho_2}{\Ssr\Srd} \|\qh_{SR}\|^2|
\qh_{RD}\qw_t|^2
    \right),\label{eq: e2e snr general}
\end{align}
where $\rho_1 = \frac{P_S}{\sigma_{R}^2}$ and $\rho_2 = \frac{P_S}{\sigma_{D}^2}$.

Using~\eqref{eq: e2e snr general}, the system's instantaneous rate is given by
\vspace{-0.4em}
\begin{align}
R(\alpha, \qw_t, \qw_r)=(1-\alpha)\log_2\left(1+ \gamma_{\mathsf{FD}} \right).\label{eq: e2e rate1}
\end{align}
Our objective is to maximize the rate $R(\alpha, \qw_t, \qw_r)$ w.r.t. $\alpha$, $\qw_t$, and $\qw_r$. This is mathematically expressed as
\vspace{-0.3em}
 \bea\label{eqn:optMain}
    \max_{\|\qw_r\|=\|\qw_t\|=1,~\!\alpha\in [0, 1)} && R(\alpha, \qw_t, \qw_r).
 \eea
The optimization problem \eqref{eqn:optMain} is  nonconvex and the challenge is to obtain optimum solutions efficiently. Towards this end, in the sequel, we propose  optimum and suboptimum schemes that solve \eqref{eqn:optMain}. More specifically, in the first step, we keep $\alpha$ fixed and propose optimum and suboptimum schemes for beamformer design. In the second step, $\alpha$ is optimized for the obtained beamformers. Although this two-step optimization problem requires a joint optimization w.r.t. $\alpha$ and $\qw_t$  in the optimum scheme, we show that it can still be solved efficiently.

\begin{Remark}
Apart from data-dependent transmit power, energy consumption in the power amplifier/RF circuitry and processing power of the LI cancellation technique could have a significant impact on FD transceiver operation~\cite{Auer:2011}. Towards this end, recent advances made with very low energy consumption micro-controllers and RF circuitry is already making it possible to use harvested power in realistic applications~\cite{Dostal:2015}. Nevertheless, performance analysis of wireless-powered FD transceiver operation with circuit and processing power consumption is an interesting future direction worth more research.
\end{Remark}
\vspace{-0.2cm}
\section{Joint Receive/Transmit Beamforming}\label{sec:Precoder designs}
In this section, we consider the beamforming design problem to solve the optimization problem \eqref{eqn:optMain} for a given $\alpha$. In this case, \eqref{eqn:optMain} turns to a problem of maximizing the minimum two-hop SINR, which is expressed as
\vspace{-0.2cm}
\begin{align}\label{eqn:opt}
    &\max_{ \|\qw_r\|=\|\qw_t\|=1}\min\left( \frac{  \frac{\rho_1}{\Ssr}|\qw_r^{\dag} \qh_{SR}|^2}{ \frac{\kappa \rho_1}{\Ssr}
\|\qh_{SR}\|^2 |\qw_r^{\dag} \qH_{RR}\qw_t|^2  + 1},
\right.\nonumber\\
&\qquad\qquad\qquad\qquad\qquad\left.\frac{\kappa\rho_2}{\Ssr\Srd} \|\qh_{SR}\|^2|
\qh_{RD}\qw_t|^2
    \right).
\end{align}
In the following subsections, we propose optimum as well as different suboptimum schemes for solving \eqref{eqn:opt}. In the optimum approach, we propose SDR problem but show that relaxation does not change the optimality of the solution, whereas in the suboptimum schemes, different linear  receiver/transmitter techniques are employed at $R$.

\vspace{-0.9em}
\subsection{Optimum Scheme}
Since the second-hop SNR does not depend on $\qw_r$, we can maximize the first-hop SINR w.r.t $\qw_r$ by fixing $\qw_t$. In this case, the optimization problem  \eqref{eqn:opt} is  re-formulated as
\vspace{-0.5em}
\begin{align}\label{eqn:opt:Rayleigh ratio problem}
   \max_{\|\qw_r\|=1} &  \frac{|\qw_r^{\dag} \qh_{SR}|^2}{ \frac{\kappa \rho_1}{\Ssr}
\|\qh_{SR}\|^2 |\qw_r^{\dag} \qH_{RR}\qw_t|^2+1},
\end{align}
which is a generalized Rayleigh ratio problem \cite{Horn}. It is well known that \eqref{eqn:opt:Rayleigh ratio problem} is globally maximized when 
\vspace{-0.2em}
 \be\label{eqn:opt2}
    \qw_r\! =\! \frac{\left( \frac{\kappa \rho_1}{\Ssr} \|\qh_{SR}\|^2
\qH_{RR}\qw_t\qw_t^\dag\qH_{RR}^\dag + \qI \right)^{-1}
\qh_{SR}}{\left\|\left(  \frac{\kappa \rho_1}{\Ssr} \|\qh_{SR}\|^2 \qH_{RR}\qw_t\qw_t^\dag\qH_{RR}^\dag + \qI \right)^{-1}
\qh_{SR}\right\|}.
 \ee
Accordingly, by substituting $\qw_r$ into \eqref{eqn:opt:Rayleigh ratio problem} and applying the Sherman Morrison formula \cite{Sherman}, the first term inside the $\min$ operator in  \eqref{eqn:opt} is obtained as
\vspace{-0.5em}
  \begin{align}
 \psi &\triangleq   \frac{\rho_1}{\Ssr}\left(\qh_{SR}^\dag\left(  \frac{ \kappa \rho_1
\|\qh_{SR}\|^2  }{\Ssr}
\qH_{RR}\qw_t\qw_t^\dag\qH_{RR}^\dag \!+\! \qI \right)^{\!-1}\!\! \qh_{SR}\right)\nonumber\\
 &= \frac{\rho_1}{\Ssr}\left(\|\qh_{SR}\|^2  - \frac{\frac{\kappa \rho_1 \|\qh_{SR}\|^2
}{\Ssr} |\qh_{SR}^\dag\qH_{RR}\qw_t|^2 }{1+ \frac{\kappa \rho_1
\|\qh_{SR}\|^2}{\Ssr} \|\qH_{RR}\qw_t\|^2 }\right).
 \end{align}
Now, the optimization problem in~\eqref{eqn:opt} is re-expressed as
\vspace{-0.2em}
  \begin{align}\label{eqn:opt:with opt w_r}
    \max_{ \|\qw_t\|=1} & \min\left( \frac{\rho_1}{\Ssr}\left(
 \|\qh_{SR}\|^2  \!- \! \frac{ \frac{\kappa \rho_1}{\Ssr}\|\qh_{SR}\|^2 |\qh_{SR}^\dag\qH_{RR}\qw_t|^2 }
{ 1  +   \frac{\kappa \rho_1}{\Ssr}\|\qh_{SR}|^2 \|\qH_{RR}\qw_t\|^2 }\right), \right.\nonumber\\
&\qquad\left.\frac{\kappa \rho_2 }{\Ssr\Srd} \|\qh_{SR}\|^2|
\qh_{RD}\qw_t|^2\right),
\end{align}
which is still difficult to solve due to its nonconvex nature.

 One of the key results of the optimum scheme is presented in the following proposition.
\begin{proposition}
\label{PropositionOptimal}
The optimal $\qw_{t,o}$ is given by
\vspace{-0.5em}
\begin{align}\label{eq:opt-prec1}
\qw_{t,o}=
\begin{cases}
\qw_{\mSINR},&
\\ \qquad \qquad\qquad \text{if}\quad f_1(\qw_{\mSINR})\leq f_2(\qw_{\mSINR})\\
\qw_{\MRT},&
\\ \qquad \qquad\qquad \text{if}\quad f_2(\qw_{\MRT})\leq f_1(\qw_{\MRT})\\
\qw_t~\text{is obtained from the feasibility problem}~{\mathcal P},&
\\ \qquad \qquad\qquad\text{otherwise}
\end{cases}
\end{align}
where
\begin{eqnarray*}
\label{eq:opt-prec1A}
 f_1(\qw_t)&=& \frac{\rho_1}{\Ssr}\left(\|\qh_{SR}\|^2  -  \frac{ \frac{\kappa \rho_1}{\Ssr}\|\qh_{SR}\|^2 |\qh_{SR}^\dag\qH_{RR}\qw_t|^2 }
{ 1  +   \frac{\kappa \rho_1}{\Ssr}\|\qh_{SR}\|^2 \|\qH_{RR}\qw_t\|^2 }\right)\nonumber\\
f_2(\qw_t)&=& \frac{\kappa \rho_2 }{\Ssr\Srd} \|\qh_{SR}\|^2|\qh_{RD}\qw_t|^2\nonumber\\
\end{eqnarray*}
\begin{align}
\qw_{\mSINR}=&\max_{\qw_t}~f_1(\qw_t)\nonumber\\
=&\min_{\qw_t}~\frac{ \frac{\kappa \rho_1}{\Ssr}\|\qh_{SR}\|^2\qw_t^\dag \qH_{RR}^\dag \qh_{SR}\qh_{SR}^\dag \qH_{RR}\qw_t}{\qw_t^\dag\left({\bf I}+ \frac{\kappa \rho_1}{\Ssr}\|\qh_{SR}\|^2\qH_{RR}^\dag \qH_{RR}\right)\qw_t }\nonumber\\
\qw_{\MRT}=&\max_{\qw_t}~f_2(\qw_t)=\frac{\qh_{RD}^{\dag}}{||\qh_{RD}||},\nonumber\\
&~t\in \left[ 0, {\rm min}\left(\frac{ \rho_1}{\Ssr}||\qh_{SR}||^2,   \frac{\kappa\rho_2 }{\Ssr\Srd} \|\qh_{SR}\|^2||\qh_{RD}||^2\right)\right]\nonumber
\end{align}
and ${\mathcal P}$ is
\begin{align}
\label{eq:opt-prec2}
{\mathcal P}:&\hspace{1em}{\rm Find} ~{\qW_t,  y} \nonumber\\
\mbox{s.t.} &\hspace{1em}
ty\leq\!  \frac{ \rho_1}{\Ssr}\left(\|\qh_{SR}\|^2y \right. \nonumber\\
&\hspace{4em}\left.-\frac{ \kappa\rho_1}{\Ssr}\|\qh_{SR}\|^2 \qh_{SR}^\dag\qH_{RR}\qW_t\qH_{RR}^\dag\qh_{SR}\!\right) \nonumber \\
&\hspace{1.5em} y= \mathsf{tr} \left( \qW_t\left(\qI+ \frac{\kappa \rho_1}{\Ssr}\|\qh_{SR}\|^2 \qH_{RR}^{\dag}\qH_{RR}\right)\right)\nonumber\\
&\hspace{1.5em} t\leq  \frac{\kappa \rho_2 }{\Ssr\Srd} \|\qh_{SR}\|^2 \qh_{RD}\qW_t\qh_{RD}^\dag\nonumber\\
 &\hspace{1.5em}\mathsf{tr} \left(\qW_t\right)=1, \qW_t\succeq 0.
\end{align}
The maximum value of $t$ for which the problem ${\mathcal P}$ is feasible gives the relaxed optimum solution ${\qW_t}$.
\end{proposition}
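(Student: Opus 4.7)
The plan is to treat the three branches of Proposition~\ref{PropositionOptimal} separately: the two corner branches by a direct dominance argument, and the central branch via an epigraph reformulation followed by a semidefinite lift with a rank-one tightness argument.

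For the corner branches, the key observation is that for any unit-norm $\qw_t$,
$\min(f_1(\qw_t),f_2(\qw_t))\le f_1(\qw_t)\le f_1(\qw_{\mSINR})$,
where the last inequality holds because $\qw_{\mSINR}$ is the unconstrained maximizer of $f_1$. Writing $f_1(\qw_t)=\frac{\rho_1}{\Ssr}(\|\qh_{SR}\|^2-R(\qw_t))$ with the obvious ratio $R(\qw_t)$ subtracted inside recasts the maximization of $f_1$ as the minimization of a generalized Rayleigh quotient of the pencil $(\frac{\kappa\rho_1}{\Ssr}\|\qh_{SR}\|^2\qH_{RR}^\dag\qh_{SR}\qh_{SR}^\dag\qH_{RR},\qI+\frac{\kappa\rho_1}{\Ssr}\|\qh_{SR}\|^2\qH_{RR}^\dag\qH_{RR})$, matching the stated expression for $\qw_{\mSINR}$. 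When $f_1(\qw_{\mSINR})\le f_2(\qw_{\mSINR})$, the value of the objective at $\qw_{\mSINR}$ already attains the displayed upper bound, so $\qw_{\mSINR}$ is globally optimal. The MRT branch is symmetric with the unconstrained maximizer $\qw_{\MRT}=\qh_{RD}^\dag/\|\qh_{RD}\|$ of $f_2$, which is immediate by Cauchy--Schwarz.

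For the central branch, I would introduce an epigraph variable $t$ and recast the problem as $\max t$ subject to $f_i(\qw_t)\ge t$ for $i=1,2$ and $\|\qw_t\|=1$. Lifting $\qW_t=\qw_t\qw_t^\dag$ (with $\qW_t\succeq 0$, $\mathsf{tr}(\qW_t)=1$, $\mathrm{rank}(\qW_t)=1$) makes the second-hop inequality affine and directly produces the third constraint of $\mathcal P$. For the first-hop inequality, I would set $y$ equal to the strictly positive denominator of $R(\qw_t)$ and cross-multiply; using $\mathsf{tr}(\qW_t)=\|\qw_t\|^2=1$ supplies the trace expression for $y$ in the second constraint of $\mathcal P$, and the cleared inequality becomes the first. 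Dropping the rank-one requirement yields exactly $\mathcal P$. Because $f_1\le\rho_1\|\qh_{SR}\|^2/\Ssr$ and $f_2\le\frac{\kappa\rho_2}{\Ssr\Srd}\|\qh_{SR}\|^2\|\qh_{RD}\|^2$, a bisection on $t$ over the interval displayed in the proposition locates the largest $t^\star$ for which $\mathcal P$ (a convex SDP for fixed $t$) is feasible.

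The main obstacle is proving that the SDR is tight. After eliminating $y$ by substitution, the feasibility set of $\mathcal P$ at $t=t^\star$ is cut out by three affine constraints on $\qW_t$ together with $\qW_t\succeq 0$; Pataki's rank theorem then guarantees an extreme optimum $\qW_t^\star$ whose rank $r$ satisfies $r(r+1)/2\le 3$, i.e.\ $r\le 2$. If $r=1$, we are done. Otherwise, I would exploit the fact that the coefficient matrices $\qH_{RR}^\dag\qh_{SR}\qh_{SR}^\dag\qH_{RR}$ and $\qh_{RD}^\dag\qh_{RD}$ appearing in the hop constraints are rank one: writing $\qW_t^\star=\lambda_1\qv_1\qv_1^\dag+\lambda_2\qv_2\qv_2^\dag$ and searching inside the two-dimensional Hermitian family $\theta_1\qv_1\qv_1^\dag+\theta_2\qv_2\qv_2^\dag+\theta_3\qv_1\qv_2^\dag+\theta_3^\ast\qv_2\qv_1^\dag\succeq 0$ under $\theta_1+\theta_2=1$, the three active linear equations in four real parameters leave a one-parameter slack that can be driven to the rank-one boundary $\theta_1\theta_2=|\theta_3|^2$ without violating feasibility (since the hop constraints see only inner products against the span of $\qv_1,\qv_2$). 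The resulting rank-one $\qw_t^\star(\qw_t^\star)^\dag$ preserves the objective value $t^\star$ and recovers the optimum of the original problem. This rank-reduction is the main technical step; everything else is standard epigraph lifting and generalized Rayleigh quotient manipulation.
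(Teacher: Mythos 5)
Your proposal is correct and, for Proposition~\ref{PropositionOptimal} proper, follows essentially the same route as the paper: the two corner branches are handled by the same dominance chain $\min(f_1,f_2)(\qw_t)\le f_i(\qw_t)\le f_i(\qw_{t}^{\star})$ (the paper phrases it as a two-scenario case split, but it is the identical argument), $\qw_{\mSINR}$ is identified through the same generalized Rayleigh quotient of the pencil you write down, and the central branch uses the same epigraph variable $t$, the same lift $\qW_t=\qw_t\qw_t^\dag$ with the auxiliary $y$ equal to the first-hop denominator, and the same search over $t$ in the displayed interval for the largest $t$ at which ${\mathcal P}$ is feasible. The one place you genuinely diverge is the tightness of the relaxation, which the paper does not prove inside Proposition~1 at all but defers to Proposition~\ref{ProofPropositionRankOne} (Appendix I), where it is established by a KKT complementary-slackness analysis of the dual matrix $\qY$, split into cases according to which multipliers vanish; that analysis also yields the explicit rank-one recovery used in Algorithm~1. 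Your alternative --- Pataki's rank bound on the three affine constraints followed by a rank-reduction within the span of the eigenvectors, exploiting that the two hop constraints involve only rank-one coefficient matrices --- is a valid and arguably shorter path; indeed, for complex Hermitian SDPs the Pataki-type bound is $r^2\le m$, so $m=3$ gives rank one directly without the reduction step. What the paper's KKT route buys in exchange for its length is a constructive description of the optimal rank-one factor (needed algorithmically in Case~b), whereas your argument is existential unless the rank-reduction step is made fully explicit.
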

\begin{proof}
Note that $\qw_{\mSINR}$ is the eigenvector corresponding to the minimum eigenvalue of the matrix $(\qI+\frac{\kappa \rho_1}{\Ssr}\|\qh_{SR}\|^2 \qH_{RR}^{\dag} \qH_{RR} )^{-1} (\qH_{RR}^{\dag} \qh_{SR} \qh_{SR}^{\dag} \qH_{RR})$. The solution of $\qw_t$ that maximizes $f_1(\qw_t)$ is $\qw_{\mSINR}$, whereas $f_2(\qw_t)$ is maximized by $\qw_{\MRT}$. Now consider the following cases:
\begin{itemize}
\item $f_1(\qw_{\mSINR})\leq f_2(\qw_{\mSINR})$:  In this case the two possible scenarios for $\qw_t\neq \qw_{\mSINR}$ are $f_1(\qw_t)\leq f_2(\qw_t)$ and  $f_1(\qw_t)>f_2(\qw_t)$. In the former case,  ${\rm min}(f_1(\qw_t), f_2(\qw_t))=f_1(\qw_t)$ but it follows that $f_1(\qw_t)\leq f_2(\qw_{\mSINR})$ since $f_1(\qw_t)\leq f_1(\qw_{\mSINR})$. In the latter case,  ${\rm min}(f_1(\qw_t), f_2(\qw_t))=f_2(\qw_t)$ but it follows that $f_2(\qw_t)\leq f_1(\qw_{\mSINR})\leq f_2(\qw_{\mSINR})$. Therefore, $\qw_{\mSINR}$ is the optimum precoder when
$f_1(\qw_{\mSINR})\leq f_2(\qw_{\mSINR})$.

\item   $f_2(\qw_{\MRT})\leq f_1(\qw_{\MRT})$:  The two possible scenarios are $f_2(\qw_t)\leq f_1(\qw_t)$ and  $f_2(\qw_t)>f_1(\qw_t)$ where  $\qw_t\neq \qw_{\MRT}$. In the former case,  ${\rm min}(f_1(\qw_t), f_2(\qw_t))=f_2(\qw_t)$ but it follows that $f_2(\qw_t)\leq f_1(\qw_{\MRT})$ since  $f_2(\qw_t)\leq f_2(\qw_{\MRT})$. In the latter case,  ${\rm min}(f_1(\qw_t), f_2(\qw_t))=f_1(\qw_t)$ but it follows that $f_1(\qw_t)\leq f_2(\qw_{\MRT})\leq f_1(\qw_{\MRT})$. Therefore, $\qw_{\MRT}$ is the optimum precoder when
$f_2(\qw_{\MRT})\leq f_1(\qw_{\MRT})$.

\item For remaining cases, to the best of our knowledge, there is no closed-form solution for the optimization problem  \eqref{eqn:opt:with opt w_r}. Introducing an auxiliary variable $t\geq 0$,  \eqref{eqn:opt:with opt w_r} is expressed as
\begin{align}\label{eqn: nonconvex quadratic optimization}
&\max_{\|\qw_t\|=1, t}\hspace{2em} t \nonumber\\
&\hspace{1em}\mbox{s.t.}\hspace{1em}  t \leq  \frac{\rho_1}{\Ssr}\left(\|\qh_{SR}\|^2 \!\! -\!\!  \frac{ \frac{\kappa \rho_1}{\Ssr}\|\qh_{SR}\|^2 |\qh_{SR}^\dag\qH_{RR}\qw_t|^2 }
                 { 1+ \frac{\kappa \rho_1}{\Ssr}\|\qh_{SR}\|^2 \|\qH_{RR}\qw_t\|^2 }\right)\nonumber \\
&\hspace{3.4em} t\leq  \frac{\kappa \rho_2 }{\Ssr\Srd} \|\qh_{SR}\|^2|\qh_{RD}\qw_t|^2.
\end{align}
This is a nonconvex quadratic optimization problem with nonconvex constraint. To solve the problem in~\eqref{eqn: nonconvex quadratic optimization}, we first apply SDR technique by using  a positive-semidefinite matrix  ${\qW}_{t} = \qw_t\qw_t^{\dag}$ and relaxing the rank-constraint on ${\qW}_{t}$. Moreover, define the auxiliary variable
\begin{equation}
y= \mathsf{tr} \left(\qW_t \left(\qI+ \frac{\kappa \rho_1}{\Ssr}\|\qh_{SR}\|^2 \qH_{RR}^{\dag}\qH_{RR}\right)\right).\nonumber
\end{equation}
The relaxed optimization problem~\eqref{eqn: nonconvex quadratic optimization} in terms of $\qW_t$, $y$, and $t$ is
\vspace{-0.2em}
\bea\label{eqn: nonconvex quadratic optimization 2}
\max_{ \qW_t,  t, y} &&\!\!\!\!\!\!\!\! t \nonumber\\
\mbox{s.t.} &&\!\!\!\!\!\!\!\! ty\leq  \frac{\rho_1}{\Ssr}\left(\|\qh_{SR}\|^2  y \right.\nonumber\\
&&\left.\qquad- \frac{\kappa \rho_1}{\Ssr}\|\qh_{SR}\|^2 \qh_{SR}^\dag\qH_{RR}\qW_t\qH_{RR}^\dag\qh_{SR}\right) \nonumber \\
&&\!\!\!\!\! y= \mathsf{tr} \left(\qW_t \left(\qI+ \frac{\kappa \rho_1}{\Ssr}\|\qh_{SR}\|^2 \qH_{RR}^{\dag}\qH_{RR}\right)\right)\nonumber\\
&&\!\!\!\!\! t\leq \frac{\kappa \rho_2}{\Ssr\Srd} \|\qh_{SR}\|^2\qh_{RD} \qW_t \qh_{RD}^{\dag}\nonumber\\
&&\!\!\!\!\!\mathsf{tr} \left(\qW_t\right)=1, \qW_t\succeq 0.
\eea
The optimization problem~\eqref{eqn: nonconvex quadratic optimization 2} is still nonconvex. However, for a given $t$, the problem turns to a convex feasibility problem given in \eqref{eq:opt-prec2}. We show that there is no need to solve this feasibility problem for all possible values of $t$. This can be demonstrated as follows. Note that the constraints of the problem~\eqref{eqn: nonconvex quadratic optimization 2} show that $t$ can be upper bounded as $t \leq {\rm min}\left(\frac{\rho_1}{\Ssr}||\qh_{SR}^2||^2, \frac{\kappa \rho_2 }{\Ssr\Srd} \|\qh_{SR}\|^2||\qh_{RD}||^2\right)$. Consequently, we can start solving the feasibility problem  \eqref{eq:opt-prec2} in the decreasing order for $t$. The largest value of $t$ for which the problem is feasible yields the optimum $\qW_t$. If the optimum $\qW_t$ is rank-one, then the relaxed problem~\eqref{eqn: nonconvex quadratic optimization 2} is equivalent to the original problem.
\end{itemize}
\end{proof}
In the following, we show that the relaxed optimization \eqref{eqn: nonconvex quadratic optimization 2}  or convex feasibility  problem~\eqref{eq:opt-prec2} provides an optimum rank-one solution. To this end, we present another key result of the optimum scheme in the following proposition.
\hspace*{-0.3cm}\begin{proposition}
\label{ProofPropositionRankOne}
The rank-one optimum $\qW_t$ can  always be guaranteed in \eqref{eqn: nonconvex quadratic optimization 2}.
\end{proposition}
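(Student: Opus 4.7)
The strategy is to reduce $\mathcal{P}$, at the optimal $t=t^\star$, to a PSD feasibility problem with only two affine trace constraints on $\qW_t$, and then invoke the Toeplitz--Hausdorff theorem on the resulting pair of Hermitian matrices to extract a rank-one feasible point. To set this up, I would first use the linear equality $y=\mathsf{tr}(\qA\qW_t)$, where $\qA=\qI+\frac{\kappa\rho_1\|\qh_{SR}\|^2}{\Ssr}\qH_{RR}^\dag\qH_{RR}$, together with the normalization $\mathsf{tr}(\qW_t)=1$ to eliminate $y$ from the first-hop constraint, giving $y = 1 + \frac{\kappa\rho_1\|\qh_{SR}\|^2}{\Ssr}\mathsf{tr}(\qH_{RR}^\dag\qH_{RR}\qW_t)$. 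After rearrangement this yields an inequality of the form $\mathsf{tr}(\qM_1\qW_t)\geq\beta_1$ for a single Hermitian matrix $\qM_1$ depending on $t^\star$ (a linear combination of $\qH_{RR}^\dag\qH_{RR}$ and the rank-one matrix $\qH_{RR}^\dag\qh_{SR}\qh_{SR}^\dag\qH_{RR}$). The second-hop constraint already has the form $\mathsf{tr}(\qM_2\qW_t)\geq\beta_2$ with $\qM_2=\qh_{RD}^\dag\qh_{RD}$, so the feasibility set becomes $\{\qW_t\succeq 0 : \mathsf{tr}(\qM_1\qW_t)\geq\beta_1,\ \mathsf{tr}(\qM_2\qW_t)\geq\beta_2,\ \mathsf{tr}(\qW_t)=1\}$.

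Next I would apply a convexity-of-joint-numerical-range argument. Take any SDR-optimal $\qW_t^\star$ and form its spectral decomposition $\qW_t^\star=\sum_i\mu_i\qu_i\qu_i^\dag$ with $\mu_i>0$, $\|\qu_i\|=1$, $\sum_i\mu_i=1$. Then the pair $(\mathsf{tr}(\qM_1\qW_t^\star),\mathsf{tr}(\qM_2\qW_t^\star))=\sum_i\mu_i(\qu_i^\dag\qM_1\qu_i,\qu_i^\dag\qM_2\qu_i)$ lies in the convex hull of the joint numerical range $W(\qM_1,\qM_2)=\{(\qu^\dag\qM_1\qu,\qu^\dag\qM_2\qu) : \qu\in\mathbb{C}^{M_T},\ \|\qu\|=1\}$. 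Because $\qM_1$ and $\qM_2$ are Hermitian, $W(\qM_1,\qM_2)$ coincides with the standard numerical range of the complex matrix $\qM_1+i\qM_2$ under the identification $\mathbb{R}^2\cong\mathbb{C}$, which is convex by the Toeplitz--Hausdorff theorem. Hence the pair already lies in $W(\qM_1,\qM_2)$ itself, so there exists a unit vector $\qw_{t,o}\in\mathbb{C}^{M_T}$ with $\qw_{t,o}^\dag\qM_j\qw_{t,o}=\mathsf{tr}(\qM_j\qW_t^\star)$ for $j=1,2$. The rank-one matrix $\qw_{t,o}\qw_{t,o}^\dag$ then satisfies all constraints of $\mathcal{P}$ at $t^\star$, proving that a rank-one optimum always exists.

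The hardest part is the careful bookkeeping of the elimination of $y$: constraint (i) in $\mathcal{P}$ is bilinear in $(t,y)$, so one must freeze $t=t^\star$ before the substitution produces a single Hermitian $\qM_1(t^\star)$. Without that reduction one would be left with three affine trace constraints on $\qW_t$, in which case Pataki's rank bound only guarantees rank at most two rather than one. Once the problem is cut down to exactly two Hermitian inequalities plus the trace equality, the Toeplitz--Hausdorff step delivers rank one essentially for free. An alternative route via KKT/complementary slackness on the dual slack $\qZ=\nu\qI-\lambda_1\qM_1-\lambda_2\qM_2$, establishing $\mathrm{rank}(\qZ)\geq M_T-1$, would also work but requires more delicate structural reasoning about the spectrum of $\qH_{RR}^\dag\qH_{RR}$.
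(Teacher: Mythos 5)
Your proof is correct, but it takes a genuinely different route from the paper. The paper proceeds via the KKT conditions of the feasibility problem: it forms the dual slack matrix $\qY$, splits into three cases according to which of the multipliers $\lambda_1,\lambda_3$ vanish, shows in two of the cases that the nullity of $\qY$ is exactly one (forcing $\mathrm{rank}(\qW_t)=1$ via complementary slackness $\qY\qW_t=\mathbf{0}$), and in the remaining degenerate case ($\lambda_3=0$, where $\qY$ collapses to a rank-one matrix) explicitly recovers a rank-one solution $\sigma_{\hat q}\qu_{\hat q}\qu_{\hat q}^\dag$ from the eigenvectors of a higher-rank optimum. You instead eliminate $y$ at fixed $t^\star$ to reduce $\mathcal P$ to a PSD feasibility problem with exactly two Hermitian trace inequalities plus the unit-trace normalization, and then invoke convexity of the joint numerical range (Toeplitz--Hausdorff) to find a unit vector matching both trace values of any feasible $\qW_t^\star$; this is the classical two-constraint rank-reduction argument and it is sound here, since the first-hop constraint is indeed affine in $(\qW_t,y)$ once $t$ is frozen and $y$ is pinned down by the equality constraint. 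Your approach is shorter and more conceptual, and it is uniformly constructive; the paper's KKT analysis is longer but yields structural information the authors actually use --- it distinguishes when the SDR output is already rank-one from when post-processing is needed, and Algorithm~1 explicitly invokes the recovery recipe of their Case~b. One small quibble: your parenthetical claim that without eliminating $y$ Pataki's bound gives only rank two is based on the real-symmetric bound $r(r+1)/2\le m$; for complex Hermitian SDPs the bound is $r^2\le m$, so three scalar constraints would already force rank one. This does not affect your main argument.
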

\begin{proof}
The proof is based on Karush-Kuhn-Tucker (KKT) conditions and given in Appendix I.
\end{proof}

\subsection{TZF Scheme}
We now present some suboptimum beamforming solutions. The first is the TZF scheme, where relay takes advantage of the multiple transmit antennas to completely cancel the LI~\cite{Suraweera:TWC:2014}. To ensure this is feasible, the number of the transmit antennas at relay should be greater than one, i.e., $M_T>1$. In addition, MRC is applied at the relay input, i.e., $\qw_r =
\frac{\qh_{SR}}{\|\qh_{SR}\|}$. After substituting $\qw_r$ into \eqref{eqn:opt}, the optimal transmit beamforming vector $\qw_t$ is obtained by solving the following problem:
\vspace{-0.2em}
   \bea\label{eqn:wt}
    \max_{\|\qw_t\|=1} &&\hspace{1em}  | \qh_{RD} \qw_t|^2 \nonumber\\
     \mbox{s.t.} &&\hspace{1em} \qh_{SR}^\dag\qH_{RR}\qw_t =0.
 \eea
 We know that ${\qA}\triangleq \qH_{RR}^\dag  \qh_{SR} \qh_{SR}^\dag \qH_{RR}$ is a rank-one Hermitian matrix with eigenvalue $\lambda \triangleq \|\qh_{SR}^\dag\qH_{RR}\|^2$ and eigenvector $\qx\triangleq \frac{\qH_{RR}^\dag \qh_{SR}}{|| \qH_{RR}^\dag \qh_{SR}||}$. Consequently, the eigenvalue decomposition of $\qA$  can be given by $\qx^\dag \left(\qI-\frac{1}{\lambda} \qA\right)={\bf 0}$ which implies that $\qx^\dag \left(\qI-\frac{1}{\lambda} \qA\right){\bar \qw}_t=0$ for all ${\bar\qw}_t \neq {\bf 0}$. Comparing this with the ZF constraint in \eqref{eqn:wt}, it is clear that we can take $\qw_t=\qB {\bar \qw}_t$, where $\qB\triangleq \qI -\frac{\qH_{RR}^\dag\qh_{SR}\qh_{SR}^\dag\qH_{RR}}{\|\qh_{SR}^\dag\qH_{RR}\|^2}$, without violating the ZF constraint. As such,  the objective function in  \eqref{eqn:wt} reduces to $|\qh_{RD} \qB{\bar \qw}_t|^2$ which is maximized with ${\bar \qw}_t=k_c \qB\qh_{RD}^\dag$. Since $||\qw_t||=1$ and $\qB=\qB^2$, it is clear that $k_c=\frac{1}{ ||\qB \qh_{RD}^\dag||}$. Consequently, the transmit beamformer for the TZF scheme is given by
 \vspace{-0.2em}
\bea\label{eqn:wtZF}
\qw_t^{\ZF} = \frac{\qB \qh_{RD}^{\dag}}{\|\qB\qh_{RD}^{\dag}\|}.
 \eea
\vspace{-0.8em}
\subsection{RZF Scheme}
As an alternative solution, the transmit beamforming vector can be set using the MRT principle, i.e., $\qw_t =\frac{\qh_{RD}^{\dag}}{\|\qh_{RD} \|}$, and $\qw_r$ is designed with the ZF criterion $\qw_r^\dag \qH_{RR} \qw_t=0$. To ensure feasibility of RZF, $R$ should be equipped with $M_R>1$ receive antennas. Substituting the MRT solution for $\qw_t$ into  \eqref{eqn:opt}, the optimal receive beamforming vector $\qw_r$ is the solution of the following problem:
   \bea\label{eqn:wr}
    \max_{\|\qw_r\|=1} &&\hspace{1em} |\qw_r^\dag \qh_{SR}|^2\nonumber\\
     \mbox{s.t.}&&\hspace{1em}  \qw_r^\dag \qH_{RR}\qh_{RD}^\dag =0.
 \eea
Using similar steps as in the TZF scheme, the optimal combining vector $\qw_r$  is obtained as
\vspace{-0.2em}
\bea\label{eqn:wrZF}
\qw_r^{\ZF} = \frac{\qD \qh_{SR}}{\|\qD\qh_{SR}\|},
 \eea
where $\qD\triangleq  \qI - \frac{\qH_{RR}\qh_{RD}^\dag\qh_{RD} \qH_{RR}^\dag}{\|\qH_{RR}\qh_{RD}\|^2}$.
\subsection{MRC/MRT Scheme}
Finally, we consider the MRC/MRT scheme, where $\qw_r$ and $\qw_t$ are set to match the first hop and second hop channel, respectively. Hence,
\vspace{-0.2em}
\bea\label{eqn:wrMRC/MRT}
\qw_r^{\MRC}  = \frac{\qh_{SR}^{\dag}}{\| \qh_{SR} \|},\quad\qw_t^{\MRT} =\frac{\qh_{RD}^{\dag}}{\|\qh_{RD} \|}.
 \eea
It is worthwhile to note that the optimum, TZF, and RZF schemes reduce to the MRC/MRT scheme in the absence of LI. Although the MRC/MRT scheme is not optimal in the presence of LI, it could be favored in situations where compatibility with HD systems is a concern. Moreover, as we show in Section VI, the MRC/MRT scheme exhibits a very good performance as compared to other schemes under mild LI effect. Note that the MRC/MRT scheme requires only the  knowledge of $\qh_{SR}$ and  $\qh_{RD}$, whereas the other three schemes require the knowledge of  $\qh_{SR}$, $\qh_{RD}$, and $\qH_{RR}$.

\section{Optimizing $\alpha$ for Instantaneous Throughput}\label{sec:IT}
In this section, we optimize $\alpha$ for the FD beamforming schemes proposed in the previous section and maximize the instantaneous throughput which is given by~\cite{Caijun:TCOM:2014}
\begin{align}
R_{\mathsf{I}}(\alpha) =(1-\alpha)\log_2 (1+\gamma_{\mathsf{FD}}).
\end{align}
This expression reveals an interesting trade-off between the duration of energy harvesting and the instantaneous throughput. A longer energy harvesting time increases the harvested energy and consequently the second hop SNR, however decreases the available time for information transmission and vice-versa. Therefore, an appropriate system design can optimize the instantaneous throughput by adjusting $\alpha$.

 \subsection{Optimum Scheme}
Substituting the optimum $\qw_{t,o}$, the instantaneous rate as a function of $\alpha$ is
  \begin{align}\label{eqn:opt:with opt alpha 1}
R_{\rm Opt}(\alpha)
     &=(1-\alpha) \log_2\left( 1+ \|\qh_{SR}\|^2\right.\\
      &\left.\times\min\left(\! \frac{\rho_1}{\Ssr}\left(
1  \!-\!  \frac{ \frac{\kappa \rho_1}{\Ssr} |\qh_{SR}^\dag\qH_{RR}\qw_{t,o}|^2 }
{ 1  \!+ \!  \frac{\kappa \rho_1}{\Ssr}\|\qh_{SR}\|^2 \|\qH_{RR}\qw_{t,o}\|^2 }\right), \right.\right.\nonumber\\
&\left.\left.\hspace{7em}\frac{\kappa \rho_2 }{\Ssr\Srd}|
\qh_{RD} \qw_{t,o}|^2\!\right)\!\right)\!.\nonumber
\end{align}
Let $b_0= \frac{\rho_2}{\rho_1}\frac{\eta}{\Srd}|
\qh_{RD}\qw_{t,o}|^2$, $b_1=\frac{\eta \rho_1}{\Ssr} |\qh_{SR}^\dag\qH_{RR}\qw_{t,o}|^2$, and $b_2=\frac{\eta \rho_1}{\Ssr}\|\qh_{SR}\|^2 \|\qH_{RR}\qw_{t,o}\|^2$.  Hence,~\eqref{eqn:opt:with opt alpha 1} is written as
 \begin{align}
 \label{eqn:opt:with opt alpha 2}
    R_{\rm Opt}(\alpha)
      &= (1-\alpha) \log_2\left( 1+ \frac{\rho_1}{\Ssr}\|\qh_{SR}\|^2\right.\nonumber\\
      &\left.\times\min\left(1-\frac{ \frac{\alpha b_1}{1-\alpha}}{1+\frac{\alpha b_2}{1-\alpha}}, \frac{\alpha b_0}{1-\alpha}\right)\right).
 \end{align}
Thus, the optimal $\alpha$ is obtained by solving
\begin{align}
\alpha^{*}_{\rm Opt}  = \argmax_{0<\alpha<1}  R_{\rm opt}(\alpha).
\end{align}
Denote $f= \frac{\rho_1}{\Ssr}\|\qh_{SR}\|^2$ and ${\tilde f}=fb_0$. The above optimization problem can be solved analytically as shown in the following proposition.
\begin{proposition}
The optimal $\alpha^{*}_{\rm Opt}$ is given by
\begin{align}\label{eq:opt alpha opt 3}
\alpha^{*}_{\rm Opt}=
\begin{cases}
\frac{e^{W\left(\frac{{\tilde f} -1}{e}\right)+1}-1}{{\tilde f} -1 + e^{W\left(\frac{{\tilde f}-1}{e}\right)+1}},&
\!\!\!\text{if}\quad e^{W\left(\frac{{\tilde f} -1}{e}\right)+1}<\alpha_0{\tilde f}+1; \\
\frac{\alpha_0}{1+\alpha_0},&\!\!\!\text{otherwise},
\end{cases}
\end{align}
where $W(x)$ is the Lambert $W$ function in which $W(x)$ is the solution of $W \exp(W) = x$ and noting that $b_1 \leq b_2$ from Cauchy-Schwarz inequality, $\alpha_0$ is given by
\begin{eqnarray*}
\label{eqn:opt:with opt alpha 4}
\alpha_0
= \frac{(b_2-b_1-b_0)+\sqrt{ b_0^2+(b_2-b_1)^2+2b_0(b_1+b_2)}}{2b_0b_2}.
\end{eqnarray*}
\end{proposition}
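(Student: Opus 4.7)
The plan is to exploit the monotone change of variable $u\triangleq \alpha/(1-\alpha)$, which maps $(0,1)\to(0,\infty)$ with $\alpha=u/(1+u)$ and $1-\alpha=1/(1+u)$. Under this substitution \eqref{eqn:opt:with opt alpha 2} collapses to $R_{\rm Opt}(u)=\log_2(1+f\min(g_1(u),g_2(u)))/(1+u)$ where $g_1(u)=(1+u(b_2-b_1))/(1+ub_2)$ and $g_2(u)=ub_0$. I would first argue that $g_1$ and $g_2$ cross exactly once on $(0,\infty)$: $g_1(0)=1>0=g_2(0)$, $g_2$ is linear-increasing, and $g_1'(u)=-b_1/(1+ub_2)^2\leq 0$ so $g_1$ is bounded and nonincreasing. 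Equating $g_1(u)=g_2(u)$ and clearing denominators yields the quadratic $b_0b_2u^2+(b_0+b_1-b_2)u-1=0$, whose positive root (guaranteed by $b_1\leq b_2$ via Cauchy--Schwarz, with positive discriminant and product of roots equal to $-1/(b_0b_2)$) is exactly the quantity denoted $\alpha_0$ in the statement; note that $\alpha_0$ is the $u$-coordinate of the crossing, not an $\alpha$-value, which explains the final back-substitution $\alpha=\alpha_0/(1+\alpha_0)$. Thus $\min(g_1,g_2)=g_2$ on $[0,u_0]$ and $\min(g_1,g_2)=g_1$ on $[u_0,\infty)$, partitioning the problem into two single-branch subproblems.

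On the upper branch $u\geq u_0$, both $g_1$ and $1/(1+u)$ are strictly decreasing, so $R_{\rm Opt}$ is strictly decreasing and its supremum is attained at $u=u_0$, mapping back to $\alpha^{*}=\alpha_0/(1+\alpha_0)$. On the lower branch $u\in[0,u_0]$, the rate simplifies to $R_{\rm Opt}(u)=\log_2(1+\tilde fu)/(1+u)$, and the stationarity condition $dR_{\rm Opt}/du=0$ reads
\[
\tilde f(1+u)=(1+\tilde fu)\ln(1+\tilde fu).
\]
The key step is to invert this transcendental equation: letting $v=1+\tilde fu$ and then $y=\ln v-1$ reduces it to $ye^{y}=(\tilde f-1)/e$, which by the definition of the Lambert $W$ function gives $y=W((\tilde f-1)/e)$, hence $v=\exp(W((\tilde f-1)/e)+1)$ and $u^{*}=(v-1)/\tilde f$. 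A short sign analysis of the numerator of $dR_{\rm Opt}/du$ (which turns out to have strictly negative derivative and therefore a unique sign change) confirms that this $u^{*}$ is the unique global maximizer of the lower-branch expression on $[0,\infty)$, so it is the constrained maximizer on $[0,u_0]$ precisely when $u^{*}\leq u_0$, i.e.\ $v\leq 1+\tilde f\alpha_0$, which is exactly the condition appearing in \eqref{eq:opt alpha opt 3}.

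To stitch the two branches, observe that when $v\leq 1+\tilde f\alpha_0$ the lower-branch interior maximum at $u^{*}$ strictly exceeds $R_{\rm Opt}(u_0)$ by unimodality, and $R_{\rm Opt}$ only decreases further on the upper branch, so the global maximum sits at $u^{*}$; the algebraic simplification $u^{*}/(1+u^{*})=(v-1)/(\tilde f-1+v)$ then produces the first case of \eqref{eq:opt alpha opt 3}. Otherwise, the lower-branch objective is still increasing at $u_0$ from the left while the upper branch is strictly decreasing, pinning the global maximum at the kink $u=u_0$ and producing the second case $\alpha_0/(1+\alpha_0)$. The only non-mechanical step in the whole plan is the Lambert-$W$ inversion of the transcendental stationarity condition; the two-regime decomposition and monotonicity checks are elementary once the substitution $u=\alpha/(1-\alpha)$ has decoupled the $(1-\alpha)^{-1}$ weight from the argument of the $\min$.
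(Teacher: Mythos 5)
Your proposal is correct and follows essentially the same route as the paper: split at the crossing point where the two arguments of the $\min$ coincide (which is exactly where $\alpha_0$ comes from), solve the lower branch via the Lambert-$W$ inversion of the stationarity condition, and show the upper branch is strictly decreasing so the optimum sits at the kink when the interior stationary point falls outside the lower region. Your substitution $u=\alpha/(1-\alpha)$ and the explicit unimodality/stitching argument merely make transparent the steps the paper compresses (it outsources the Lambert-$W$ derivation to a citation), so no substantive difference.
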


\begin{proof}
The following two cases are considered:
\begin{enumerate}
 \item if $ \frac{\alpha b_0}{1-\alpha}<1-\frac{ \frac{\alpha b_1}{1-\alpha}}{1+\frac{\alpha b_2}{1-\alpha}}$ or $\alpha < \frac{\alpha_0}{1+\alpha_0}$, we have
     \vspace{-0.2em}
 \bea
 \label{eqn:opt:with opt alpha 5}
    R_{\rm Opt}(\alpha)
    = (1-\alpha)\log_2 \Big(1+  \frac{\alpha}{1-\alpha} {\tilde f}\Big).\nonumber
 \eea
Therefore, taking the first order derivative of $R_{\rm Opt}(\alpha)$ with respect to $\alpha$, and using the procedure described in~\cite{Caijun:TCOM:2014}, the optimal time portion $\alpha$ can be obtained as
\bea
\label{eqn:opt:with opt alpha 5}
    \alpha_{\rm Opt}^* = \frac{e^{W(\frac{{\tilde f} -1}{e})+1}-1}{{\tilde f} -1 + e^{W(\frac{{\tilde f}-1}{e})+1}}.
  \eea
    \item  Otherwise, if $\alpha > \frac{\alpha_0}{1+\alpha_0}$, the instantaneous throughput in~\eqref{eqn:opt:with opt alpha 2} is given by
    \vspace{-0.3em}
 \bea
    R_{\rm Opt}(\alpha)
= (1-\alpha)\log_2 \Big(1\!+\! f\Big( 1- \frac{\alpha b_1}{1-\alpha+\alpha b_2}\Big)\Big).\nonumber
 \eea

Taking the first derivative of $R_{\rm Opt}(\alpha)$ with respect to $\alpha$ yields
\bea
    \frac{ d R_{\rm Opt}(\alpha)}{d \alpha}
    =-\log_2 \Big(1+ f\Big( 1- \frac{\alpha b_1}{1-\alpha+\alpha b_2}\Big)\Big)\nonumber\\
    -\frac{(1-\alpha)f b_1}{(1-\alpha+\alpha b_2)^2 \log2}\frac{1}{ 1+ f\Big( 1- \frac{\alpha b_1}{1-\alpha+\alpha b_2}\Big)},\nonumber
 \eea
 which is strictly smaller than zero. Therefore, $R_{\rm Opt}(\alpha)$ is a decreasing function with respect to $\alpha$, and hence the optimal $\alpha$ is given by
 \bea
    \alpha_{\rm Opt}^* &=& \frac{\alpha_0}{1+\alpha_0}.
  \eea

  \end{enumerate}
 \end{proof}
\subsection{TZF Scheme}
Substituting $\qw_t^{\ZF}$ into~\eqref{eq: e2e snr general}, the instantaneous throughput of the TZF scheme is given by
 \bea
    R_{\mathsf{I},\TZF}(\alpha)
    = (1-\alpha)\log_2 \left(1+ \min\left(  \frac{\rho_1}{\Ssr} \|\qh_{SR}\|^2,\right.\right.\nonumber\\
     \left.\left.\frac{\kappa \rho_2}{\Ssr\Srd} \|\qh_{SR}\|^2\|\qB\qh_{RD}\|^2
    \right)\right).
 \eea
 Hence, the optimal $\alpha $ can be obtained by solving the following optimization problem
\begin{align}\label{eq:opt:TZF}
&\alpha^{*}_{\TZF}  = \argmax_{0<\alpha<1}  R_{\mathsf{I},\TZF}(\alpha).
\end{align}
In~\eqref{eq:opt:TZF}, since $R_{\mathsf{I},\TZF}(\alpha)$ is concave with respect to $\alpha$, $\alpha^{*}_{\TZF}$ can be obtained by solving the equation $\frac{d R_{\mathsf{I},\TZF}(\alpha)}{d \alpha}=0$. Using the similar approach as in the optimum scheme, $\alpha^{*}_{\TZF}$ can be derived as
\begin{align}\label{eq:opt-alpha-TZF}
\alpha^{*}_{\TZF}=
\begin{cases}
\frac{e^{W\left(\frac{a_1 -1}{e}\right)+1}-1}{a_1 -1 + e^{W\left(\frac{a_1-1}{e}\right)+1}},& \text{if}\quad e^{W\left(\frac{a_1 -1}{e}\right)+1}<\frac{a_1}{\alpha_{1}} +1;\\
\frac{1}{1 + \alpha_{1}},&\text{otherwise},
\end{cases}
\end{align}
where $a_1 =  \frac{ \eta \rho_2}{\Ssr\Srd}\|\qh_{SR}\|^2\|\qB\qh_{RD}\|^2$, $a_2  = \frac{\Ssr}{ \rho_1 \|  \qh_{SR}\|^2}$, and $\alpha_{1} = a_1 a_2$.
\vspace{-0.2em}
\subsection{RZF Scheme}
Substituting $\qw_r^{\ZF}$ into~\eqref{eq: e2e snr general}, the instantaneous throughput of the RZF scheme can be written as
\vspace{-0.2em}
 \bea
    R_{\mathsf{I},\RZF}(\alpha)
    = (1-\alpha)\log_2 \Big(1+ \min\Big(  \frac{\rho_1}{\Ssr} \|\qD\qh_{SR}\|^2,\nonumber\\
      \frac{\kappa \rho_2}{\Ssr\Srd} \|\qh_{SR}\|^2\|\qh_{RD}\|^2
    \Big)\Big).
 \eea
Accordingly, the optimal $\alpha $ can be obtained as
\begin{align}\label{eq:opt-alpha-RZF}
\alpha^{*}_{\RZF}=
\begin{cases}
\frac{e^{W\left(\frac{a_3 -1}{e}\right)+1}-1}{a_3 -1 + e^{W\left(\frac{a_1-1}{e}\right)+1}},& \text{if}\quad e^{W\left(\frac{a_3 -1}{e}\right)+1}<\frac{a_3}{\alpha_{2}} + 1;\\
\frac{1}{1 + \alpha_{2}},&\text{otherwise},
\end{cases}
\end{align}
where $a_3 =  \frac{ \eta \rho_2}{\Ssr\Srd}\|\qh_{SR}\|^2\|\qh_{RD}\|^2$, $a_4  = \frac{\Ssr}{ \rho_1 \|\qD  \qh_{SR}\|^2}$, and $\alpha_{2} = a_3 a_4$.
\vspace{-0.5em}
\subsection{MRC/MRT Scheme}
Substituting $\qw_r^{\MRC}$ and $\qw_t^{\MRT}$ into~\eqref{eq: e2e snr general}, the instantaneous throughput of the MRC/MRT scheme can be expressed as
\vspace{-0.2em}
\begin{align} \label{eq:R_MRC_initial}
    &R_{\mathsf{I},\MRC}(\alpha)
    = (1\!-\!\alpha)\log_2 \left(1+ \right.\nonumber\\
    &\hspace{5em}\left.\min\left( \frac{\frac{\rho_1}{\Ssr}\| \qh_{SR}\|^2}{ \frac{\kappa \rho_1}{\Ssr}
\|\qh_{SR}\|^2 \frac{|\hSR^{\dag}\vH_{R R}\hRD^{\dag}|^2}{\|\hSR \hRD\|^2}  +1},\right.\right.\nonumber\\
     &\hspace{7em}\left.\left.\frac{\kappa\rho_2}{\Ssr\Srd} \|\qh_{SR}\|^2\|
\qh_{RD}\|^2
    \right)\right).
\end{align}
For the notational convenience, we denote $b_3= \frac{\rho_2}{\Ssr\Srd}\| \qh_{SR}\|^2\| \qh_{RD}\|^2$, $b_4  = \frac{\rho_2}{\Ssr\Srd} |\hSR^{\dag}\vH_{R R}\hRD^{\dag}|^2$, and $b_5 = \frac{\rho_2}{\rho_1}\frac{1}{\Srd} \|\hRD\|^2$. Hence,~\eqref{eq:R_MRC_initial} can be written as
\vspace{-0.2em}
\begin{align}\label{eq:R_MRC_notational}
    R_{\mathsf{I},\MRC}(\alpha)
      &=(1-\alpha)\log_2 \left(1 + b_3\right.\nonumber\\
      &\hspace{2em}\left.\times  \min\left(
    \frac{1}{ \frac{\eta\alpha}{1-\alpha} b_4 +
b_5},\frac{\eta\alpha}{1\!-\!\alpha}
    \!\right)\!\right)\!.
\end{align}
As such, the optimal time-split $\alpha$ is the solution of the following problem:
\begin{align}
&\alpha^{*}_{\MRC}  = \argmax_{0<\alpha<1}  R_{\mathsf{I},\MRC}(\alpha).
\end{align}
The above optimization problem can be solved analytically, and we have the following key result.
\begin{proposition}
The optimal $\alpha^{*}_{\MRC}$ is given by
\vspace{-0.3em}
\begin{align}\label{eq:opt-alpha-MRC}
\alpha^{*}_{\MRC}=
\begin{cases}
\frac{e^{W\left(\frac{a_3 -1}{e}\right)+1}-1}{a_3 -1 + e^{W\left(\frac{a_3-1}{e}\right)+1}},& \text{if}\quad e^{W\left(\frac{a_3 -1}{e}\right)+1}<\frac{a_3}{\alpha_{3}} +1;\\
\frac{1}{1 + \alpha_{3}},&\text{otherwise},
\end{cases}
\end{align}
 where $\alpha_3 = \frac{2 \eta b_4}{-b_5 + \sqrt{b_5^2 + 4b_4}}$.
\end{proposition}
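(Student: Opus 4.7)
The plan is to mirror the argument used in the Optimum Scheme proposition: determine the crossover value of $\alpha$ at which the two SINR expressions inside the $\min$ become equal, split the domain accordingly, and analyze each piece separately. Setting the two terms equal and letting $u \triangleq \frac{\eta\alpha}{1-\alpha}$, the crossover equation $\frac{1}{u b_4 + b_5} = u$ reduces to the quadratic $b_4 u^2 + b_5 u - 1 = 0$, whose positive root is $u_0 = \frac{-b_5 + \sqrt{b_5^2 + 4 b_4}}{2 b_4}$. Inverting $u_0 = \frac{\eta\alpha_c}{1-\alpha_c}$ gives the crossover point $\alpha_c = \frac{1}{1+\alpha_3}$ with $\alpha_3$ exactly as defined in the statement. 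For $\alpha < \alpha_c$ the active minimum is $\frac{\eta\alpha}{1-\alpha}$, while for $\alpha > \alpha_c$ it is $\frac{1}{\frac{\eta\alpha}{1-\alpha}b_4 + b_5}$.

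On the low-$\alpha$ branch $(0,\alpha_c)$, the throughput reduces to $R_{\mathsf{I},\MRC}(\alpha) = (1-\alpha)\log_2\!\left(1 + a_3\,\frac{\alpha}{1-\alpha}\right)$, upon noting $b_3\eta = a_3$. This has the same functional form as Case~1 of the Optimum Scheme proposition, so differentiating, setting the derivative to zero, and solving via the Lambert-$W$ function (using the identity $e^{W(x)}W(x) = x$) yields the interior critical point $\alpha^*_{\rm low} = \frac{e^{W((a_3-1)/e)+1}-1}{a_3-1+e^{W((a_3-1)/e)+1}}$, which is the first branch of \eqref{eq:opt-alpha-MRC}. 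I would simply cite the earlier derivation rather than repeat it.

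On the high-$\alpha$ branch $(\alpha_c,1)$, I would argue by monotonicity rather than by solving a first-order condition. As $\alpha$ increases, $\frac{\eta\alpha}{1-\alpha}$ is increasing, so $\frac{\eta\alpha}{1-\alpha}b_4+b_5$ is increasing, hence $\frac{b_3}{\frac{\eta\alpha}{1-\alpha}b_4+b_5}$ is decreasing, and therefore the logarithmic factor is strictly decreasing. Since $(1-\alpha)$ is also strictly decreasing and both factors are positive, $R_{\mathsf{I},\MRC}(\alpha)$ is strictly decreasing on $(\alpha_c,1)$. Consequently the supremum on the closed subinterval $[\alpha_c,1)$ is attained at the left endpoint $\alpha_c = \frac{1}{1+\alpha_3}$, which is the second branch of \eqref{eq:opt-alpha-MRC}.

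Finally I would stitch the two branches together. The global maximum is the interior critical point $\alpha^*_{\rm low}$ provided it lies in $(0,\alpha_c)$; otherwise the function is non-decreasing up to $\alpha_c$ on the low branch and decreasing afterwards, so the maximum occurs at $\alpha_c$. The inequality $\alpha^*_{\rm low} < \alpha_c$ can be rearranged: writing $E \triangleq e^{W((a_3-1)/e)+1}$, we have $\frac{E-1}{a_3-1+E} < \frac{1}{1+\alpha_3}$, which after cross-multiplication and cancellation simplifies to $E < \frac{a_3}{\alpha_3}+1$, exactly the condition in the statement. The only mildly delicate step is this final algebraic reduction; the rest is essentially a direct transcription of the Optimum Scheme argument to the MRC/MRT setting.
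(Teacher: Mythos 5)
Your proposal is correct and follows essentially the same route as the paper: split the domain at the crossover $\alpha_c=\tfrac{1}{1+\alpha_3}$ (which you derive explicitly from the quadratic $b_4u^2+b_5u-1=0$, where the paper simply states the condition), apply the Lambert-$W$ solution on the low-$\alpha$ branch, and show the high-$\alpha$ branch is decreasing. Your only deviation is cosmetic — you establish the decrease by a product-of-positive-decreasing-functions argument instead of the paper's explicit derivative computation, and you spell out the cross-multiplication giving the condition $E<\tfrac{a_3}{\alpha_3}+1$, which the paper leaves implicit.
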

\begin{proof}
We consider two cases as follows:
\begin{enumerate}
 \item if $\frac{\eta\alpha}{1-\alpha} < \frac{1}{ \frac{\eta\alpha}{1-\alpha} b_4 +
b_5}$  or $\alpha < \frac{1}{1 + \alpha_3}$, we have
 \bea
    R_{\mathsf{I},\MRC}(\alpha)
    &= &(1-\alpha)\log_2 \left(1+  \frac{\eta\alpha}{1-\alpha}b_3\right).\nonumber
 \eea

Therefore, taking the first order derivative of $R_{\mathsf{I},\MRC}(\alpha)$ with respect to $\alpha$, and following the same procedure as in optimum scheme, the optimal time portion $\alpha$ can be obtained as
\vspace{-0.2em}
\bea
    \alpha^* &=& \frac{e^{W(\frac{a_3 -1}{e})+1}-1}{a_3 -1 + e^{W(\frac{a_3-1}{e})+1}},
  \eea
where $a_3 = \frac{\eta \rho_2}{\Ssr\Srd}\| \qh_{SR}\|^2\| \qh_{RD}\|^2 = \eta b_3$.
 \item  Otherwise, if $\alpha > \frac{1}{1 + \alpha_3}$, the instantaneous throughput in~\eqref{eq:R_MRC_notational} is given by
 \vspace{-0.2em}
 \bea
    R_{\mathsf{I},\MRC}(\alpha)
    = (1-\alpha)\log_2 \left(1+  \frac{b_3}{ \frac{\eta\alpha}{1 - \alpha}b_4 + b_5}\right).\nonumber
 \eea

 Taking the first derivative of $R_{\mathsf{I},\MRC}(\alpha)$ with respect to $\alpha$ yields
 \vspace{-0.3em}
\bea
    \frac{ d R_{\mathsf{I},\MRC}(\alpha)}{d \alpha}
    = -\log_2 \left(1+  \frac{b_3}{ \frac{ \eta\alpha}{1 - \alpha}b_4 + b_5}\right) \nonumber\\
    - \frac{(1-\alpha)\eta b_3b_4}{\left(\frac{ \eta\alpha}{1 - \alpha}b_4 + b_5\right) \left(\frac{ \eta\alpha}{1 - \alpha}b_4 + b_5 + b_3\right)\log2}.\nonumber
 \eea
 which is strictly smaller than zero. Therefore, $R_{\mathsf{I},\MRC}(\alpha)$ is a decreasing function with respect to $\alpha$, and hence the optimal $\alpha$ is given by
 \vspace{-0.3em}
 \bea
    \alpha^* &=& \frac{1}{1 + \alpha_3}
  \eea
  \end{enumerate}
 \end{proof}
We end this section with the following remarks. In contrast to the suboptimum schemes,  the obtained solution of $\qw_t$  in optimum scheme depends on $\alpha$. As a consequence, joint optimization w.r.t. $\alpha$ and $\qw_t$ is required in the latter scheme. There are two ways to solve this joint optimization. The first way is to find $\qw_t$  by following Proposition 1 and next performing a one-dimensional line search over $0<\alpha<1$. This guarantees the global optimum solutions for $\alpha$ and $\qw_t$. Another way is to employ an iterative approach  where each iteration step consists of a two-step  optimization, i.e., optimizing $\qw_t$ for a given $\alpha$ and vice-versa. In either way, the computational complexity of the proposed optimum scheme is limited due to the following facts. As seen from Proposition 1, there is a need to solve SDR problem only for a specific scenario. Moreover,  the  rank-one optimum solution can always be recovered from optimum $\qW_t$ which is obtained by solving the convex feasibility problem ${\mathcal P}$ in a few number of iterations. Also $\alpha$ can be obtained analytically for a given $\qw_t$ or only a  one-dimensional search is  required for finding the optimum $\alpha$. Despite these facts,  the computational complexity  of the optimum scheme is higher than that of the suboptimum schemes. This may be justified since, depending on the scenarios,  the optimum scheme  significantly outperforms the  suboptimum schemes (see Fig. \ref{fig: Instantanous-throu-v-alpha}).
 \begin{algorithm}
 \caption{ The proposed optimum scheme for the instantaneous throughput maximization.}
 \begin{algorithmic}
  \renewcommand{\algorithmicrequire}{\textbf{Step 1:} }
 \REQUIRE Initialize $\alpha$:\\
 Choose $\alpha$ from its grid (Line search Method -LS) or\\
  generate an initial point for $\alpha$ (Alternating  optimization method-AO) where $\alpha \in [0, 1)$.
 \renewcommand{\algorithmicrequire}{\textbf{Step 2:} }
 \REQUIRE Obtain the transmit beamformer ${\bf w}_{t,{\rm o}}$ using (15).
  \IF {(15) requires solving the feasibility problem ${\mathcal P}$ of (16)}
  \STATE \vspace{-1em}
    \IF { ${\bf W}_{\rm t}$ is rank-one}
  \STATE Take ${\bf w}_{t, {\rm o}}$ as the eigenvector corresponding to non-zero eigenvalue of  ${\bf W}_{\rm t}$.
 \ELSE
    \STATE Take ${\bf w}_{t, {\rm o}}$ as given in {\it Case b} of the Appendix I  ( see (82))
 \ENDIF
  \ENDIF
 \renewcommand{\algorithmicrequire}{\textbf{Step 4:} }
 \REQUIRE
 \IF {AO method}
 \WHILE{not converged }
  \STATE Obtain $\alpha_{\rm Opt}$ from (28) using ${\bf w}_{t, {\rm o}}$ obtained in\textbf{ Step 2}.\\
  \STATE Update $\alpha$ as $\alpha=\alpha_{\rm Opt}$ and go to \textbf{Step 2}.
 \ENDWHILE
 \STATE Save ${\bf w}_{t, {\rm o}}$, $\alpha$,  and the objective function.\\
 \ELSIF{LS method}
    \STATE Save ${\bf w}_{t, {\rm o}}$, $\alpha$,  and the objective function.\\
    \STATE Take another $\alpha$  from its grid, and repeat \textbf{Step 1}.
 \ENDIF

 \renewcommand{\algorithmicrequire}{\textbf{Step 3:} }
 \REQUIRE Choose the  ${\bf w}_{t, {\rm o}}$ and $\alpha$ that give maximum objective value.
 \end{algorithmic}
 \end{algorithm}

We outline the proposed optimum scheme for the instantaneous throughput maximization problem in Algorithm 1.
\bigformulatop{46}{ \vskip-0.5cm
\begin{align}\label{eq:cdf of gammaTZF: Asymptotic}
F_{\gamma_{\TZF}^{\infty}}(z)& \approx
 \begin{cases}
 \left(\frac{1}{\Gamma(M_R+1)}
 +
  \frac{1}{\Gamma(M_T-1)\Gamma(M_R)}
  \!\sum\limits_{k=0}
       ^{\infty}
 \frac{(-1)^{k+1}}{k!(k+M_T)}
  \left(\frac{\SnD}{\SnR}\frac{d_2^{\tau}}{\kappa} \right)^{M_T+k-1}
  \!\!\!\!\!\frac{1}{M_R - M_T-k +1}
   \right)
 \left(\bGsd\right)^{M_R}\!\!\!\!, &                                M_T>M_R+\!1,  \\
\frac{1}{\Gamma(M_R+1)}\left(1+\frac{1}{\Gamma(M_R)}
        \left(\ln\left({\rho_1}\right)-
        \ln\left({ d_1^{\tau}z }\right) + \psi(1)\right) \left(\frac{\SnD}{\SnR}\frac{ \Srd}{\kappa} \right)^{M_R}  \right)
        \left(\bGsd\right)^{M_R},
         & M_T = M_R+\!1,\\
        \frac{\Gamma(M_R - M_T + 1)}{\Gamma(M_T)\Gamma(M_R)}\left(\frac{d_2^{\tau}}{\kappa} \right)^{M_T-1}
        \left(\frac{\Ssr z}{\rho_2}\right)^{M_T -1}, &  M_T < M_R+\!1.
\end{cases}
\end{align}
}
\section{Delay-constrained Throughput}\label{sec:DCT}
We now consider the delay-constrained scenario, where the source transmits at a constant rate $R_c$ bits/sec/Hz. Due to the time variation of the fading channel, outage events where the instantaneous channel capacity is below the source transmission rate may occur. Hence, the average throughput can
be computed as~\cite{Nasir:TWC:2013}
\begin{align}\label{eq:Delay-constrained Transmission}
R_{\mathsf{D}}(\alpha) = (1-\Pout)R_c (1-\alpha),
\end{align}
where $\Pout$ is the outage probability, which is defined as the probability that the instantaneous SINR falls below a predefined threshold, $\gamth$. Mathematically, it can be written as
\vspace{-0.2em}
\begin{align}\label{eq:OP definition}
\Pout = \Prob(\gamma_{\mathsf{FD}} < \gamth) = F_{\gamma}(\gamth),
\end{align}
where $\gamth = 2^{R_c}-1$. Therefore in order to find the delay-constrained throughput, the remaining key task is to characterize the exact outage probability of the system. In the sequel, we investigate the outage probability of the considered TZF, RZF, and MRC/MRT schemes. In addition, simple high SNR approximations are presented, which provide a concise characterization of the diversity order, and enable a performance comparison of these three schemes. Derivation of the outage probability of the optimum scheme is difficult. Hence we have resorted to simulations for evaluating the delay-constrained throughput of the optimum scheme in Section~\ref{sec:numerical results}.
\vspace{-0.4em}
\subsection{TZF Scheme}
Substituting the $\qw_t^{\ZF}$ and $\qw_r^{\MRC}$ into~\eqref{eq: e2e snr general}, the end-to-end SNR $\gamma_{\TZF}$ can be expressed as
\vspace{-0.2em}
\begin{align}\label{eq:gammaTZF}
\gamma_{\TZF}\!=\!\frac{\rho_1 \|\hSR\|^2}{(1-\alpha) d_1^{\tau}} \min\left(1\!-\alpha,\! \frac{\rho_2}{\rho_1}\frac{\eta\alpha}{d_2^{\tau}}\|\thRD\|^2\right)\!,
\end{align}
where $\thRD$ is an $(M_T-1) \times 1$ vector. $\|\thRD\|^2$ follows the chi-square distribution with $(M_T-1)$ degrees of freedom, denoted as $\|\thRD\|^2 \sim \chi_{2(M_T-1)}^2$~\cite{MathematicalStatistics_1978}. Let $Y_1 = \min\left(1-\alpha,  \frac{\rho_2}{\rho_1}\frac{\eta\alpha}{d_2^{\tau}}\|\thRD\|^2 \right)$. The cdf of $Y_1$ is given by~\cite[Appendix II]{Zhu:TCOM:2015}
\vspace{-0.2em}
\begin{align}\label{eq:cdf of gammaTZF: Y}
F_{Y_1} (y)&= \left\{%
 \begin{array}{clcr}
  1 &                  y > 1-\alpha,  \\
  1 - \frac{\Gamma\left(M_T-1,\frac{\rho_1}{\rho_2}\frac{ y\Srd}{\eta \alpha}\right)}{\Gamma(M_T-1)}              &      y < 1-\alpha.
  \end{array}%
\right.
\end{align}
 Therefore, the cdf of $\gamma_{\TZF}$ can be obtained as
\vspace{-0.2em}
\begin{align}
\label{eq:cdf of gammaTZF}
F_{\gamma_{\TZF}} (z)
  &=\!
  1 \!-\! \frac{1}{\Gamma(M_R)}\int_{\frac{ d_1^{\tau}z }{\rho_1}}^{\infty}\!\!
  Q\left(M_T\!-1,\frac{d_1^\tau d_2^\tau}{\kappa \rho_2} \frac{z}{x}\right)\nonumber\\
  &\qquad\qquad\qquad\times
   x^{M_R\!-1}e^{-x} dx,
\end{align}
where $Q(a,x) = \Gamma(a,x)/\Gamma(a)$.

To the best of the authors's knowledge, the integral in~\eqref{eq:cdf of gammaTZF} does not admit a closed-form expression. However,~\eqref{eq:cdf of gammaTZF} can be evaluated numerically.

To gain further insights, we now look into the high SNR regime and derive a simple approximation for the outage probability, which enables the characterization of the achievable diversity order of the TZF scheme.
\begin{proposition}\label{Propos:hsnr:TZF}
In the high SNR regime, i.e., $\rho_1,\rho_2 \rightarrow \infty$, the outage probability of the TZF scheme can be approximated as~\eqref{eq:cdf of gammaTZF: Asymptotic} at the top of the page.

\end{proposition}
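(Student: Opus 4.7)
The plan is to start from the exact outage expression in \eqref{eq:cdf of gammaTZF} and exploit the fact that in the high-SNR regime both $\epsilon \triangleq \bGsd$ and $c \triangleq \frac{d_1^\tau d_2^\tau z}{\kappa \rho_2}$ tend to zero. Writing $Q(a,y) = 1 - P(a,y)$ with $P(a,y) = \gamma(a,y)/\Gamma(a)$ the regularized lower incomplete gamma function, the cdf splits naturally as
\begin{align*}
F_{\gamma_{\TZF}}(z) = P(M_R, \epsilon) + \frac{1}{\Gamma(M_R)} \int_{\epsilon}^{\infty} P(M_T-1, c/x)\, x^{M_R-1} e^{-x}\, dx.
\end{align*}
The first summand represents the event that $\|\hSR\|^2$ alone causes outage; since this is a Gamma$(M_R,1)$ tail near the origin, its leading Maclaurin expansion gives $P(M_R,\epsilon) \approx \epsilon^{M_R}/\Gamma(M_R+1)$, which is the source of the common factor $\frac{1}{\Gamma(M_R+1)}(\bGsd)^{M_R}$ appearing in every branch of \eqref{eq:cdf of gammaTZF: Asymptotic}.

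Next I would expand the incomplete gamma in its Maclaurin series,
\begin{align*}
\gamma(M_T-1, c/x) = \sum_{k=0}^\infty \frac{(-1)^k (c/x)^{M_T-1+k}}{k!\,(M_T-1+k)},
\end{align*}
swap the sum and the integral, and reduce the task to the asymptotic evaluation of $\int_\epsilon^\infty x^{M_R-M_T-k} e^{-x}\, dx$ as $\epsilon \to 0$. The three branches in the proposition correspond exactly to whether, at the dominant index $k=0$, the resulting exponent $M_R-M_T+1$ is positive, zero, or negative.

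For $M_T < M_R+1$ the $k=0$ integral tends to $\Gamma(M_R-M_T+1)$, higher-$k$ terms are suppressed by additional negative powers of $\rho_2$, and because $M_T-1 < M_R$ this contribution dominates $P(M_R,\epsilon)$; collecting constants reproduces the third branch. For $M_T = M_R + 1$ the $k=0$ integral becomes $E_1(\epsilon)$, which admits the classical expansion $E_1(\epsilon) \approx \psi(1) - \ln \epsilon = \psi(1) + \ln \rho_1 - \ln(d_1^\tau z)$; adding this logarithmic contribution to $P(M_R,\epsilon)$ and using $\Gamma(M_T)=\Gamma(M_R+1)$ reproduces the second branch. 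For $M_T > M_R + 1$ every term in the series has a singular integral, and I would use one integration by parts to extract the leading behavior $\int_\epsilon^\infty x^{M_R-M_T-k} e^{-x}\, dx \sim -\epsilon^{M_R-M_T-k+1}/(M_R-M_T-k+1)$; after substitution the power $\epsilon^{M_R}$ factors out of every term, leaving a convergent series in the bounded parameter $\frac{\SnD}{\SnR}\frac{d_2^\tau}{\kappa}$ that matches the first branch.

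The main obstacle is justifying the term-by-term analysis in the $M_T > M_R + 1$ regime, since each integrated term individually diverges as $\epsilon \to 0$ even though $\gamma(M_T-1,c/x)$ is uniformly bounded by $\Gamma(M_T-1)$. I would handle this by first rescaling via $u = x/\epsilon$, which produces a global $\epsilon^{M_R}$ prefactor and rewrites the remaining integral over $[1,\infty)$ with an integrand depending only on the bounded ratio $\beta = c/\epsilon = (\rho_1/\rho_2)(d_2^\tau/\kappa)$; the surviving series then converges by the $1/k!$ Taylor decay of $\gamma$, and dominated convergence legitimises the interchange that yields the announced asymptotic expression. A secondary book-keeping step is required to verify that replacing $e^{-x}$ by $1$ and $\Gamma(\cdot,\epsilon)$ by its leading singular part introduces only higher-order $\epsilon$ corrections, which is automatic once the $\epsilon^{M_R}$ scaling is isolated.
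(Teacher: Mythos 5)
Your proposal is correct and follows essentially the same route as the paper's Appendix B: split off $P(M_R,\epsilon)$, Maclaurin-expand $\gamma(M_T-1,c/x)$, interchange sum and integral, and evaluate $\int_{\epsilon}^{\infty}x^{M_R-M_T-k}e^{-x}dx$ asymptotically with the three branches determined by the sign of $M_R-M_T+1$ (the paper writes these integrals exactly as $E_n$/$\alpha_n$ functions before expanding, whereas you approximate them directly, and your rescaling-plus-dominated-convergence justification of the term-by-term step in the $M_T>M_R+1$ case is a minor tightening of the same argument). No substantive difference.
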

\begin{proof}
See Appendix~\ref{propos:Apx:TZF:highSNR}.
\end{proof}

By inspecting~\eqref{eq:cdf of gammaTZF: Asymptotic}, we see that the TZF scheme achieves a diversity order of $\min(M_R, M_T-1)$. This is intuitive since one degree of freedom is used for interference cancellation. Moreover, we notice that for the case $M_R+1=M_T$, $F_{\gamma_{\TZF}^{\infty}}(z)$ decays as $\rho_1^{-M_R}\ln(\rho_1)$ rather than $\rho_1^{-M_R}$ as in the conventional case, which implies that in the energy harvesting case the slope of $F_{\gamma_{\TZF}^{\infty}}(z)$ converges much slower compared with that in the constant power case.
\vspace{-0.3em}
\subsection{RZF Scheme}
Invoking~\eqref{eq: e2e snr general}, and using $\qw_r^{\ZF}$ and $\qw_t^{\MRT}$, the end-to-end SNR $\gamma_{\RZF}$ can be expressed as
\vspace{-0.3em}
\begin{align}\label{eqn:SNR:R2}
    &\gamma_{\RZF}= \! \min\! \left(\frac{\rho_1}{d_1^\tau} \qh_{SR}^{\dag}\qD\qh_{SR},
    \frac{\kappa \rho_2
}{d_1^\tau d_2^\tau}\|\qh_{SR}\|^2\|\qh_{RD}\|^2\right)\nonumber\\
&=  \!\min\! \left(\!\frac{\rho_2}{d_1^\tau}\hat{\qh}_{SR}^{\dag}{\sf diag}\left(0, 1,\!\cdots\!,1\right)\hat{\qh}_{SR},
    \frac{\kappa \rho_1}{d_1^\tau d_2^\tau}
\|\qh_{SR}\|^2\|\qh_{RD}\|^2\!\right)\nonumber\\
&= \!\min\! \left(\frac{\rho_1}{d_1^\tau}\|\tilde{\qh}_{SR}\|^2,
    \frac{\kappa \rho_2}{d_1^\tau d_2^\tau}\|\qh_{SR}\|^2\|\qh_{RD}\|^2\right),
\end{align}
where $\hat{\qh}_{SR} = {\bf \Phi}{\qh}_{SR}$ with $\bf\Phi$ is a unitary matrix, and $\tilde{\qh}_{SR}$ is a $(M_R-1)\times 1$ vector, consisting of the $M_R-1$ last element of $\hat{\qh}_{SR}$. In~\eqref{eqn:SNR:R2}, the first equality follows
from the fact that $\qD$ is idempotent and the second equality is due to the eigen decomposition. Let us denote $Z_1 \triangleq\frac{1}{d_1^{\tau}}\left( \|\thSR\|^2  + |\tilde{h}_1|^2\right)$  where $\hat{h}_{1}$ is the first element of the $\hat{\qh}_{SR}$ and $X_1 = \frac{\|\thSR\|^2}{\|\thSR\|^2  + |\tilde{h}_1|^2}$. Therefore, the end-to-end SINR can be re-expressed as
\vspace{-0.4em}
\setcounter{equation}{47}
\begin{align}\label{eq:gammaRZF}
\gamma_{\RZF} = Z_1 \min  \left(\rho_1 X_1,\frac{\kappa\rho_2}{ d_2^{\tau}}
\|\hRD\|^2\right).
\end{align}
It is well known that $Z_1$ follows central chi-square distribution with $2M_R$ degrees-of-freedom, denoted as $Z_1\sim \chi^2_{2M_R}$ and that $X_1$ follows a beta distribution with shape parameters $M_R-1$ and $1$, denoted as $X_1 \sim \mathsf{Beta} (M_R-1, 1)$, with~\cite{MathematicalStatistics_1978}
\bea
F_{X_1}(x)\!=\!x^{M_R-1},\quad 0<x<1.
\eea
Moreover, let $Y_2 =\frac{\kappa\rho_2}{d_2^{\tau}}\|\hRD\|^2 $; we have $F_{Y_2}(y)=P\left(M_T, \frac{d_2^{\tau}}{\kappa\rho_2} y \right)$, where $P(a,x)=\gamma(a,x)/\Gamma(a)$. With $F_{X_1}(x)$ and $F_{Y_2}(y)$ in hand, $F_{\gamma_{\RZF}}(z)$ can be expressed as
\vspace{-0.4em}
\begin{align}\label{eq:cdf of gammaRZF}
F_{\gamma_{\RZF}}(z)
& \!= 1\!-\! Q\left(\!M_R, \bGsd\!\right)+
\\
&\frac{1}{\Gamma(M_R)}
\left( \int_{\bGsd}^{\infty}
P\left(\!M_T, \frac{d_1^{\tau}d_2^{\tau}}{\kappa \rho_2} \frac{z}{x} \!\right)
x^{M_R-1}e^{-x} dx
\right.\nonumber\\
&\left.
+ \left(\!\bGsd\!\right)^{M_R-1}
\int_{\bGsd}^{\infty}
Q\left(\!M_T, \frac{d_1^{\tau}d_2^{\tau}}{\kappa \rho_2} \frac{z}{x} \!\right)
e^{-x} dx \right).\nonumber
\end{align}

Although \eqref{eq:cdf of gammaRZF} does not admit a closed-form solution, it can be efficiently evaluated numerically. Now, we look into the high SNR regime, and investigate the diversity order.
\begin{proposition}\label{Propos:hsnr:RZF}
In the high SNR regime, i.e., $\rho_1, \rho_2  \rightarrow \infty$, the outage probability of the RZF scheme can be approximated as
\vspace{-0.4em}
\begin{align}\label{eq:cdf of gammaRZF: Asymptotic}
F_{\gamma_{\RZF}^{\infty}}(z)& \approx
 \begin{cases}
 \frac{1}{\Gamma(M_R)} \left(\bGsd\right)^{M_R-1},&\hspace{-7.5em} M_R < M_T+\!1,\\
 \frac{1}{\Gamma(M_R)}\left(1 \!+\! \frac{1}{\Gamma(M_T+1)} \left(\frac{\SnD}{\SnR}\frac{d_2^\tau}{\kappa}\right)^{M_T}\right)
 \left(\frac{d_1^{\tau}z}{\rho_1}\right)^{M_T}\!\!, & \\
 \hspace{12em} M_R = M_T+\!1,\\
 \frac{\Gamma(M_R-\!M_T)}{\Gamma(M_R)\Gamma(M_T+1)}\!\!
 \left(\frac{d_2^\tau}{\kappa}\right)^{M_T}\!\!
 \left(\frac{\Ssr z}{\rho_2}\!\right)^{M_T}, &\\
 \hspace{12em}M_R> M_T+\!1.
\end{cases}
\end{align}
\end{proposition}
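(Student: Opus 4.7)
The plan is to start from the exact CDF expression in \eqref{eq:cdf of gammaRZF} and extract its leading behavior as $\rho_1,\rho_2\to\infty$ by Taylor-expanding the regularized incomplete gamma functions around zero argument. Since $\bGsd\to 0$ and $\frac{d_1^{\tau} d_2^{\tau} z}{\kappa\rho_2 x}\to 0$ for every fixed $x>0$ in the integration range, I can use the standard small-argument expansions $P(a,x)\approx x^a/\Gamma(a+1)$ and $Q(a,x)\approx 1-x^a/\Gamma(a+1)$. This converts the boundary term $1-Q(M_R,\bGsd)$ into $(\bGsd)^{M_R}/\Gamma(M_R+1)=O(\rho_1^{-M_R})$, which will turn out to always be subdominant.

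The bulk of the work lies in the two integrals. For the first, $I_1:=\int_{\bGsd}^{\infty} P(M_T,\tfrac{d_1^{\tau} d_2^{\tau} z}{\kappa\rho_2 x})\,x^{M_R-1}e^{-x}dx$, substituting the small-argument expansion of $P(M_T,\cdot)$ yields $I_1\approx \frac{1}{\Gamma(M_T+1)}\bigl(\frac{d_1^{\tau} d_2^{\tau} z}{\kappa\rho_2}\bigr)^{M_T}\int_{\bGsd}^{\infty} x^{M_R-1-M_T}e^{-x}dx$. For the second, $I_2:=\int_{\bGsd}^{\infty} Q(M_T,\tfrac{d_1^{\tau} d_2^{\tau} z}{\kappa\rho_2 x})e^{-x}dx$, replacing $Q(M_T,\cdot)$ by $1$ at leading order gives $I_2\approx e^{-\bGsd}\to 1$, so that the $(\bGsd)^{M_R-1}I_2/\Gamma(M_R)$ term contributes $(\bGsd)^{M_R-1}/\Gamma(M_R)$ at leading order.

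The three cases of the proposition correspond to the exponent $M_R-1-M_T$ being negative, zero, or positive. When $M_R>M_T+1$, the integrand $x^{M_R-1-M_T}e^{-x}$ is integrable at the origin and the integral tends to $\Gamma(M_R-M_T)$, so $I_1/\Gamma(M_R)=O(\rho_2^{-M_T})$ dominates the $O(\rho_1^{-(M_R-1)})$ piece from $I_2$; rewriting $\bigl(\frac{d_1^{\tau} d_2^{\tau} z}{\kappa\rho_2}\bigr)^{M_T}=\bigl(\frac{d_2^{\tau}}{\kappa}\bigr)^{M_T}\bigl(\frac{\Ssr z}{\rho_2}\bigr)^{M_T}$ gives the third line. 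When $M_R=M_T+1$ the exponent collapses to zero and the integral reduces to $e^{-\bGsd}\to 1$, so the $I_1$ and $I_2$ contributions are both of order $\rho_1^{-M_T}$; using $\rho_1/\rho_2=\SnD/\SnR$ to factor $(\bGsd)^{M_T}$ out of both produces the bracketed middle case. When $M_R<M_T+1$ the integrand is non-integrable near zero and the lower cut-off $\bGsd$ must be retained; a local analysis on $[\bGsd,\epsilon]$ shows the integral behaves like $(\bGsd)^{M_R-M_T}/(M_T-M_R)$ (with an extra $\ln(1/\bGsd)$ in the borderline subcase $M_R=M_T$), so after collecting powers and using $\rho_1/\rho_2=\SnD/\SnR$ one gets $I_1=O((\bGsd)^{M_R})$ up to a possible logarithm, which is strictly subdominant to $(\bGsd)^{M_R-1}/\Gamma(M_R)$ and leaves only the latter.

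The main obstacle is the regime $M_R<M_T+1$: the naive substitution of the leading-order expansion of $P(M_T,\cdot)$ produces an integral that diverges as $\bGsd\to 0$, and it is tempting to conclude that $I_1$ is enhanced. The resolution is that the divergence is exactly cancelled by the pre-factor $\rho_2^{-M_T}$, so $I_1$ settles at $(\bGsd)^{M_R}$, one order below the $I_2$ contribution; carrying out this power-counting carefully, and verifying that the neglected higher-order terms in the expansions of $P$ and $Q$ remain at that subleading order, is the only delicate point. The remainder is routine: collect powers of $\rho_1$ and $\rho_2$, absorb the noise-variance ratio via $\rho_1/\rho_2=\SnD/\SnR$, and regroup to match the three stated forms.
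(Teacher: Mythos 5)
Your proposal is correct and follows essentially the same route as the paper: both start from the exact cdf \eqref{eq:cdf of gammaRZF}, apply the small-argument series of the regularized incomplete gamma functions, and identify the dominant power of $\rho_1^{-1}$, $\rho_2^{-1}$ in each of the three antenna regimes (the paper merely keeps the full series in $k$, evaluates $\mathcal{I}_1(k)$ and $\mathcal{I}_2(k)$ exactly via $E_n$ and $\alpha_n$ functions, and truncates at the end rather than up front). Your handling of the would-be divergent integral for $M_R<M_T+1$ --- showing the prefactor $\rho_2^{-M_T}$ cancels the $(\bGsd)^{M_R-M_T}$ blow-up so that $I_1=O\bigl((\bGsd)^{M_R}\bigr)$ is subdominant --- is exactly the power counting implicit in the paper's case split of \eqref{eq:integral I1}.
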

\begin{proof}
See Appendix~\ref{proof:Apx:RZF:highSNR}.
\end{proof}

Proposition~\ref{Propos:hsnr:RZF} indicates that the RZF scheme achieves a diversity order of $\min(M_R-1, M_T)$. This result is also intuitively satisfying since one degree-of-freedom should be allocated for LI cancellation at the receive side of $R$.
\vspace{-0.3em}
\subsection{MRC/MRT Scheme}
The outage probability analysis of the MRC/MRT scheme for arbitrary $M_T$ and $M_R$ appears to be cumbersome. Therefore, we now consider two special cases as follows: Case-1) $M_T=1$, $M_R\geq 1$ and Case-2) $M_T \geq 1$, $M_R=1$.

Case-1): In this case $\frac{|\hSR^{\dag}\vH_{R R}\hRD^{\dag}|^2}{\|\hSR \hRD\|^2}$ is given by
\vspace{-0.3em}
\begin{align}\label{eq: WRHaaWT: MRC/MRT case-2}
X_2 &\triangleq |\qw_r^{\MRC} \vh_{R R}|^2=\frac{ \mid \hat{\vh}_{SR,1}\mid^2 }{ \|\hSR \|^2}  \| \vh_{RR}\|^2.
\end{align}
For notational convenience, we define  $c_1 = \frac{\rho_1}{ d_1^{\tau}}$,  $c_2 = \frac{\kappa \rho_1 \Sap}{ d_1^{\tau}}$, $c_3 =\frac{\kappa \rho_2}{ d_1^{\tau}d_2^{\tau}} $. Then, the end-to-end SINR can be re-expressed as
\vspace{-0.4em}
\begin{align}\label{eq:gammaMRC/MRT Case-1}
\gamma_{\MRC} &=\!
\min\left(\!   \frac{c_1 \|\hSR\|^2}
{c_2|\hat{\vh}_{SR,1}|^2  \| \vh_{RR}\|^2\! +\! 1},
c_3 \|\hSR\|^2 |h_{RD}|^2\!\right).
\end{align}
Let us denote $X  = c_1/\left(c_2 X_2  +\frac{1}{Y_3}\right)$ where $X2 =\frac{|\hat{\vh}_{SR,1}|^2  \| \vh_{RR}\|^2}{ \|\hSR\|^2} $ and $Y  = c_3Y_3Y_4$, with $Y_3=\|\hSR\|^2$ and $Y_4 = |h_{RD}|^2$.
Accordingly, the cdf of $\gamma^{\MRC}$ in~\eqref{eq:gammaMRC/MRT Case-1} can be expressed as
\vspace{-0.4em}
\begin{align}\label{eq:eq:OP exact case-1 joint cdf of X1 and X2}
F_{\gamma_{\MRC}}(z)&=\Prob( \min(X,Y) <z),\nonumber\\
 &= 1- \Prob(X>z, Y>z).
\end{align}
Conditioned on $Y_3$, the RVs $X$ and $Y$ are independent and hence we have
\vspace{-0.3em}
\begin{align}\label{eq:eq:OP exact case-1 joint cdf of X1 and X2}
 &\Prob(X>z, Y>z) \\\nonumber
 &=\int_{\frac{ \Ssr z}{\rho_1}}^{\infty} \left(1- F_{X |Y_3} (z))(1-F_{Y|Y_3} (z)\right) f_{Y_3}(y) dy,\\\nonumber
 &=\int_{\frac{ \Ssr z}{\rho_1}}^{\infty}\!
 F_{X_2} \left(\frac{1}{c_2}\left(\frac{c_1}{z}\! -\frac{1}{y}\right)\right)\!\left(1\!-F_{Y_4}\left(\frac{z}{c_3 y}\right)\right)f_{Y_3}(y) dy.
\end{align}
In order to evaluate~\eqref{eq:eq:OP exact case-1 joint cdf of X1 and X2} we require the cdf of the RV, $X_2$. Note that $ \|\vh_{RR}\|^2 \sim\!\chi_{2M_R}^2$, $ Z_2\triangleq\frac{ \mid \hat{\vh}_{SR,1}\mid^2 }{\|\hSR \|^2}$ is distributed as $Z_2\sim \Bta(1,M_R\!-\!1)$~\cite{MathematicalStatistics_1978}, and the cdf of $X_2$ can be readily evaluated as~\cite{Mohammadi:TCOM:2015}
 \vspace{-0.4em}
\begin{align}\label{eq:cdf of Y:chi2 times beta}
F_{X_2}(t)
=G_{2 3}^{2 1} \left(t \  \Big\vert \  {1, M_R \atop 1, M_R, 0} \right).
\end{align}
Now, using the cdf of RV $Y_3$, and substituting~\eqref{eq:cdf of Y:chi2 times beta} into~\eqref{eq:eq:OP exact case-1 joint cdf of X1 and X2} we obtain
\vspace{-0.4em}
\begin{align}\label{eq:OP exact case-1 final}
&F_{\gamma_{\MRC}}(z)
= 1-\frac{1}{\Gamma(M_R)}\\
 &
 \times\!\!\int_{\frac{ \Ssr z}{\rho_1}}^{\infty}\!\!
G_{2 3}^{2 1} \left( \frac{1}{c_2}\left(\frac{c_1}{z}\!- \!\frac{1}{y}\right)\ \!\! \Big\vert \ \!\!  {1, M_R \atop 1, M_R, 0} \right)
y^{M_R-1} e^{-\left(y+\frac{z}{c_3 y}\right)} dy.\nonumber
\end{align}
To the best of the authors' knowledge, the integral in~\eqref{eq:OP exact case-1 final} does not admit a closed-form
solution. However, \eqref{eq:OP exact case-1 final} can be evaluated numerically. We now look into the high SNR regime to gain more insights. To this end, neglecting the noise term of the first term inside in minimum function in~\eqref{eq:gammaMRC/MRT Case-1}, we write
\vspace{-0.4em}
\begin{align}\label{eq:gammaMRC/MRT Asymptotic Case-1}
\gamma_{\MRC}^{\mathsf{low}} > \gamma_{\MRC}  = \min\left(   \frac{1}{\kappa X_2} ,\frac{\kappa \rho_2}{ \Ssr\Srd} \parallel\hSR\parallel^2 |h_{RD}|^2\right).
\end{align}
We now present the following proposition.

\begin{proposition}\label{Propos:hsnr:MRC1}
In the high SNR regime, i.e., $\rho_1, \rho_2\rightarrow\infty$, with $M_T=1$ the outage probability of the MRC/MRT scheme can be approximated as
\vspace{-0.3em}
\begin{align}\label{eq:lower bound for cdf of gammaMRCMRT_Case1}
F_{\gamma_{\MRC}^{\mathsf{low}}}(z)
 &= 1-
\frac{2}{\Gamma(M_R)}
G_{2 3}^{2 1} \left(\frac{1}{\kappa \Sap z}  \  \Big\vert \  {1, M_R \atop 1, M_R, 0} \right)\nonumber\\
&\qquad\times\left(\frac{d_1^{\tau}d_2^{\tau}}{\rho_2 \kappa} z\right)^{\frac{M_R}{2}}\!\! K_{M_R}\left(2\sqrt{\frac{ d_1^{\tau}d_2^{\tau}}{\rho_2 \kappa} z}\right).
\end{align}
\end{proposition}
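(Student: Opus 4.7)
The plan is to exploit independence to factorize the tail of the $\min$ in $\gamma_{\MRC}^{\mathsf{low}}$. Writing $A\triangleq 1/(\kappa X_2)$ and $B\triangleq \frac{\kappa\rho_2}{\Ssr\Srd}\|\hSR\|^2|h_{RD}|^2$, I would first verify that $A$ and $B$ are independent random variables. Since $X_2 = Z_2\,\|\vh_{RR}\|^2$ with $Z_2 \triangleq |\hat h_{SR,1}|^2/\|\hSR\|^2$, I would invoke the classical property that for a vector with i.i.d.\ $\mathcal{CN}(0,1)$ entries the direction $\hSR/\|\hSR\|$ is independent of the magnitude $\|\hSR\|^2$, so $Z_2 \sim \Bta(1, M_R-1)$ is independent of $\|\hSR\|^2 \sim \chi^2_{2M_R}$; combined with the independence of $\vh_{RR}$ and $\hRD$ from $\hSR$, this yields that $X_2$ is independent of $\|\hSR\|^2|h_{RD}|^2$, and hence
\begin{equation*}
F_{\gamma_{\MRC}^{\mathsf{low}}}(z) = 1 - \Prob(A>z)\,\Prob(B>z).
\end{equation*}

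For the first factor, $\Prob(A>z) = \Prob\bigl(X_2 < 1/(\kappa z)\bigr) = F_{X_2}\bigl(1/(\kappa z)\bigr)$. Applying the Meijer-$G$ form of the cdf of $X_2$ in~\eqref{eq:cdf of Y:chi2 times beta}, after absorbing the $\Sap$ scale of $\|\vh_{RR}\|^2$ into the argument, produces the factor $G_{2\,3}^{2\,1}\!\left(\tfrac{1}{\kappa\Sap z}\,\Big\vert\,{1,M_R\atop 1,M_R,0}\right)$. For the second factor, conditioning on $\|\hSR\|^2$ and using $|h_{RD}|^2\sim\mathrm{Exp}(1)$ leads to
\begin{equation*}
\Prob(B>z) = \frac{1}{\Gamma(M_R)}\int_0^\infty y^{M_R-1}\exp\!\left(-y - \frac{d_1^\tau d_2^\tau z}{\kappa\rho_2\,y}\right) dy,
\end{equation*}
which is the standard Bessel-function representation in~\cite[Eq.~(3.471.9)]{Integral:Series:Ryzhik:1992} and evaluates to $\frac{2}{\Gamma(M_R)}\bigl(\tfrac{d_1^\tau d_2^\tau z}{\kappa\rho_2}\bigr)^{M_R/2} K_{M_R}\!\bigl(2\sqrt{d_1^\tau d_2^\tau z/(\kappa\rho_2)}\bigr)$. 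Multiplying the two factors gives the claimed expression for $F_{\gamma_{\MRC}^{\mathsf{low}}}(z)$.

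The main subtlety I anticipate is justifying the independence of $A$ and $B$, since at first glance $X_2$ and $\|\hSR\|^2$ appear coupled through the common vector $\hSR$. The resolution is the Basu-type observation that the direction $\hSR/\|\hSR\|$ is independent of $\|\hSR\|$ for isotropic complex Gaussian entries, which decouples $Z_2$ from $\|\hSR\|^2$; this single step is the only non-mechanical ingredient in the argument. Once it is in place, the remainder is a direct lookup of the cdf~\eqref{eq:cdf of Y:chi2 times beta} for the first factor and a single application of the Erlang-times-exponential product-tail identity for the second, both of which are routine.
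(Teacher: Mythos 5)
Your proposal is correct and follows essentially the same route as the paper: factor the tail of the $\min$ using the independence of $X_2$ and $\|\hSR\|^2|h_{RD}|^2$, evaluate the first factor with the Meijer-$G$ cdf~\eqref{eq:cdf of Y:chi2 times beta}, and the second with the Bessel-function integral~\cite[Eq.~(3.471.9)]{Integral:Series:Ryzhik:1992}. The only difference is that you explicitly justify the independence step (direction/magnitude decoupling of $\hSR$), which the paper merely asserts.
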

\begin{proof}
See Appendix~\ref{proof:Apx:MRC1:highSNR}.
\end{proof}
Moreover, by applying a Bessel function approximation for small arguments~\cite[Eq. (9.6.9)]{Abramowitz_Handbook_1970}, in~\eqref{eq:lower bound for cdf of gammaMRCMRT_Case1} we can write
\vspace{-0.3em}
\begin{align}\label{eq:error floor}
&F_{\gamma_{\MRC}^{\mathsf{low}}}(z)
 \rightarrow 1- G_{2 3}^{2 1} \left(\frac{1}{\kappa \Sap z}  \  \Big\vert \  {1, M_R \atop 1, M_R, 0} \right).
\end{align}
Note that~\eqref{eq:error floor} presents the outage probability floor and indicates that the MRC/MRT scheme with $M_T=1$ exhibits a zero-diversity order behavior in presence of residual LI.

Case-2) In this case $\frac{|\hSR^{\dag}\vH_{R R}\hRD^{\dag}|^2}{\|\hSR \hRD\|^2}$ simplifies to
\vspace{-0.3em}
\begin{align}\label{eq: WRHaaWT: MRC/MRT case-1}
Y_5 &\triangleq|\vh_{R R} \WT^{\MRT}|^2 = (\vh_{R R} \boldsymbol{\Phi}_t
\diag\{1,0,\cdots,0\} \boldsymbol{\Phi}^{\dag}_t\vh_{R R}^{\dag} )\nonumber\\
&= |\hat{\vh}_{RR,1}|^2,
\end{align}
where $\boldsymbol{\Phi}_t$ is an unitary matrix and follows from eigen decomposition and $\hat{\vh}_{RR} =\vh_{RR} \boldsymbol{\Phi}_t$. Hence, $\gamma_{\MRC}$ can be written as
\vspace{-0.3em}
\begin{align}
\gamma_{\MRC} &=
\min\left(   \frac{c_1 |h_{SR}|^2}
{c_2
|h_{SR}|^2 |\hat{\vh}_{RR,1}|^2 + 1}, c_3|h_{SR}|^2 \|\hRD\|^2\right).
\end{align}
Let us define $U = \frac{c_1 X_3}{c_2 X_3 Y_5 + 1}$ and $V = c_3X_3Y_6$,
where $X_3=|h_{SR}|^2$, $Y_6=\|\hRD\|^2$. Note that conditioned on $X_3$, the RVs, $U$ and $V$ are independent and hence we have
\vspace{-0.3em}
\begin{align}
\label{eq:cdf of gammaMRCMRT_Case2 exact}
&F_{\gamma_{\MRC}}(z)\\
&=1- \int_{\frac{ \Ssr z}{\rho_1}}^{\infty} \left(1- F_{U| X_3} (z)\!\right)\!\!\left(1\!-\!F_{V|X_3}(z)\right) f_{X_3}(x) dx,\nonumber\\
&=1-\int_{\frac{ \Ssr z}{\rho_1}}^{\infty}
 \left(1-e^{-\frac{1}{c_2x} \left(\frac{c_1 x}{z}-1\right)}\right)
 Q\left(M_T,\frac{z}{c_3 x}\right)
  e^{-x} dx.\nonumber
\end{align}
Having obtained the exact outage probability expression, we now look into the high SNR regime and establish the following asymptotic outage probability approximation.
\begin{proposition}\label{Propos:hsnr:MRC2}
In the high SNR regime, i.e., $\rho_1,\rho_2\rightarrow\infty$, with $M_R=1$, the outage probability of the MRC/MRT scheme can be approximated as
\vspace{-0.3em}
\begin{align}\label{eq:cdf of gammaMRCMRT_Case2 series approx}
&F_{\gamma_{\MRC}^{\infty}}(z)  \approx 1-
\left(1-e^{-\frac{1}{\kappa \Sap z} }\right)\nonumber\\
&\hspace{4em}\times
\left(e^{-\bGsd} \!-\! \frac{1}{\Gamma(M_T)}\frac{d_1^{\tau}z}{\rho_1}\sum_{k=0}^{\infty} \frac{(-1)^{k} }{k! (M_T+\!k)}\right.\nonumber\\
&\hspace{4em}\times\left. \left(\frac{\rho_1}{\rho_2}\frac{d_2^{\tau}}{\kappa} \!\right)^{M_T+k}\!\!\!\! E_{M_T+k}\left(\bGsd\right)\right)\!.
\end{align}
\end{proposition}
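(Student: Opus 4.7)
The starting point is the exact expression
\begin{align*}
F_{\gamma_{\MRC}}(z)=1-\!\!\int_{\frac{d_1^{\tau}z}{\rho_1}}^{\infty}\!\!\left(1-e^{-\frac{1}{c_2 x}\left(\frac{c_1 x}{z}-1\right)}\right) Q\!\left(M_T,\tfrac{z}{c_3 x}\right) e^{-x}\, dx.
\end{align*}
My plan is to split the work into three steps: (i) approximate the LI-related exponential factor at high SNR so that it comes outside the integral, (ii) expand the upper incomplete gamma ratio $Q(M_T,\cdot)$ as a power series, and (iii) recognize the resulting integrals as $E_n$-functions.

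For step (i), observe that $\tfrac{c_1}{c_2}=\tfrac{1}{\kappa\sigma_{RR}^{2}}$ so
\[
\tfrac{1}{c_2 x}\!\left(\tfrac{c_1 x}{z}-1\right)=\tfrac{1}{\kappa\sigma_{RR}^{2} z}\left(1-\tfrac{d_1^{\tau}z}{\rho_1 x}\right).
\]
As $\rho_1\to\infty$, the correction $\tfrac{d_1^{\tau}z}{\rho_1 x}$ vanishes uniformly on the integration range $x\ge\tfrac{d_1^{\tau}z}{\rho_1}$ except in a shrinking neighbourhood of the lower limit whose contribution is negligible, so the factor $1-e^{-(\cdot)}$ can be replaced by the constant $1-e^{-1/(\kappa\sigma_{RR}^{2}z)}$ and pulled out. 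This leaves
\[
F_{\gamma_{\MRC}^{\infty}}(z)\approx 1-\bigl(1-e^{-1/(\kappa\sigma_{RR}^{2}z)}\bigr)\!\int_{\bGsd}^{\infty}\! Q\!\left(M_T,\tfrac{z}{c_3 x}\right) e^{-x} dx.
\]

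For step (ii), I would use the Taylor series $\gamma(a,y)=\sum_{k\ge 0}\tfrac{(-1)^k y^{a+k}}{k!(a+k)}$ so that
\[
Q\!\left(M_T,\tfrac{z}{c_3 x}\right)=1-\tfrac{1}{\Gamma(M_T)}\sum_{k=0}^{\infty}\tfrac{(-1)^k}{k!(M_T+k)}\!\left(\tfrac{z}{c_3 x}\right)^{\!M_T+k}.
\]
Interchanging sum and integral (justified termwise since the series converges uniformly on $[\bGsd,\infty)$), the constant term gives $\int_{\bGsd}^{\infty}e^{-x}dx=e^{-\bGsd}$, and each $k$-term reduces to $\int_{\bGsd}^{\infty} x^{-(M_T+k)} e^{-x}dx$.

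For step (iii), I use the identity $\int_{x_0}^{\infty} u^{-n} e^{-u} du = x_0^{1-n} E_n(x_0)$ with $x_0=\bGsd$ and $n=M_T+k$, which follows directly from the substitution $u=x_0 t$ in the defining integral of $E_n$. Plugging $c_3=\tfrac{\kappa\rho_2}{d_1^{\tau}d_2^{\tau}}$ and collecting factors gives
\[
\left(\tfrac{z}{c_3}\right)^{\!M_T+k}\!\!\cdot\!\left(\tfrac{d_1^{\tau}z}{\rho_1}\right)^{\!1-M_T-k}\!=\tfrac{d_1^{\tau}z}{\rho_1}\!\left(\tfrac{\rho_1}{\rho_2}\tfrac{d_2^{\tau}}{\kappa}\right)^{\!M_T+k},
\]
which is exactly the prefactor appearing in the target expression, and inserting back yields \eqref{eq:cdf of gammaMRCMRT_Case2 series approx}.

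The main obstacle, in my view, is step (i): the replacement of the $x$-dependent exponent by its high-SNR limit must be justified carefully, because near the lower limit $x=\bGsd$ we have $\tfrac{c_1 x}{z}=1$, so the factor actually vanishes there. The argument I would make is that this boundary region has measure $O(\rho_1^{-1})$ and the integrand is bounded, so the error contributes only a higher-order term and is absorbed into the $\approx$ symbol. The other technical point is justifying the termwise integration of the $Q$-series, which is standard since the partial sums are dominated by $Q(M_T,\tfrac{z}{c_3 x})+1\le 2$ on the integration range, but should be mentioned briefly.
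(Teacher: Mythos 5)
Your proposal is correct and follows essentially the same route as the paper's proof: factor out the loop-interference exponential at high SNR (using $c_1/c_2 = 1/(\kappa\sigma_{RR}^2)$), expand $\Gamma(M_T,\cdot)$ via the series of \cite[Eq. (8.354.2)]{Integral:Series:Ryzhik:1992}, and identify the remaining integrals with $E_n$-functions. The only difference is that you supply the justification for replacing the $x$-dependent exponent by its limit (the $O(\rho_1^{-1})$ boundary region argument) and for the termwise integration, both of which the paper simply asserts.
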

\begin{proof}
Note that in the high SNR regime, i.e., $\rho_1,\rho_2 \rightarrow \infty$, the cdf $F_{\gamma^{\MRC}}(z)$ can be approximated as
\vspace{-0.2em}
\begin{align}\label{eq:cdf of gammaMRCMRT_Case2 approx}
F_{\gamma_{\MRC}^{\infty}}(z) & \approx 1-
\frac{\left(1-e^{-\frac{c_1}{c_2z} }\right)}
{\Gamma(M_T)}
\int_{\frac{ \Ssr z}{\rho_1}}^{\infty}
 \Gamma\left(M_T,\frac{z}{c_3 x}\right)
  e^{-x} dx.
\end{align}
Now applying~\cite[Eq. (8.354.2)]{Integral:Series:Ryzhik:1992} and using the definition of the E$_{n}$-function we can easily obtain the desired result.
\end{proof}
In most cases, $k=0$ is sufficient for accurate results leading to the following compact expression
\vspace{-0.2em}
\begin{align}\label{eq:cdf of gammaMRCMRT_Case2 series approx compact}
&F_{\gamma_{\MRC}^{\infty}}(z)\approx 1-
\left(1-e^{-\frac{1}{\kappa\Sap}\frac{1}{z} }\right)
\left(e^{-\bGsd} - \frac{ 1}{\Gamma(M_T+1)}\right.\nonumber\\
&\hspace{5em}\times\left.\left(\frac{\rho_1}{\rho_2}\frac{\Srd}{\kappa} \right)^{M_T} \bGsd E_{M_T}\left(\bGsd\right)\right),
\end{align}
while in  other cases it can be truncated using few terms up to $10$.

At this point, it is important to determine the value of $\alpha$ that maximizes the throughput. We note that delay-constrained throughput should converge to the ceiling value of $R_c (1-\alpha)$ when $\Pout \rightarrow 0$. For each beamforming scheme we observe that $\Pout$ is a complicated function of $\alpha$ and it decreases as the value of $\alpha$ is increased. However, this will lead to the decrease of the term $(1-\alpha)$ at the same time. Therefore, an optimal value of $\alpha$ that maximizes the delay-constrained throughput exists and it can be found by solving the following optimization problem~\cite[Eq. (18)]{Caijun:TCOM:2014}
\vspace{-0.2em}
\begin{align}\label{eq:alpha:DC}
&\alpha^{*}  = \argmax_{0<\alpha<1} R (\alpha).
\end{align}
Given~\eqref{eq:cdf of gammaTZF},~\eqref{eq:cdf of gammaRZF},~\eqref{eq:OP exact case-1 final} and~\eqref{eq:cdf of gammaMRCMRT_Case2 exact}, unfortunately the optimization problem in~\eqref{eq:alpha:DC} does not admit closed-form solutions.
However, the optimal $\alpha^{*}$ can be solved numerically.
\vspace{-0.2em}
\section{Numerical Results and Discussion}\label{sec:numerical results}
We now present numerical results based on analytical expressions developed and investigate the impact of key system parameters on the performance.  The simulations adopt parameters of the 3GPP LTE for small cell deployments~\cite{LTE}. The maximum transmit power of the source node is set to $26$ dBm.  The energy conversion efﬁciency is set to be $\eta=0.5$.\footnote{We note that the typical values for practical parameters used in EH systems will depend on both the system application and specific technology used  for implementation of RF energy harvesting circuits.}
\subsection{Instantaneous Throughput}
We consider the influence of optimal time-split $\alpha$ and LI effect on the instantaneous throughput of different beamforming schemes. Fig.~\ref{fig: Instantanous-throu-v-alpha} shows the instantaneous throughput versus $\alpha$ of the beamforming schemes for a single time frame and channel realizations.
There are two groups of curves: the OPA (dashed line) and  EPA (solid line) curves. OPA and EPA refer to the cases where source allocates different and equal power levels for energy harvesting phase and the information transmission phase respectively. The results for OPA have been obtained using a two dimentional grid search stratergy over $\alpha$ and the splitting factor that controls the source power for the energy harvesting phase and the information transmission phase. It is clear that the OPA scheme achieves a higher throughput than the EPA scheme over the entire range of the time-split.
We can see that the values of the optimal $\alpha$ calculated by~\eqref{eq:opt alpha opt 3},~\eqref{eq:opt-alpha-TZF},~\eqref{eq:opt-alpha-RZF}, and~\eqref{eq:opt-alpha-MRC} coincide with the corresponding ones obtained via simulation. Also, as expected, the optimum scheme outperforms all other schemes on all time-split values. In addition, simulation results, not shown in the figure to avoid clutter, reveal that the values of the optimal $\alpha$ decrease as either the number of relays' receive antennas or the sources' transmit power increases. This is because in these cases the relay node can harvest the same amount of energy in a shorter time. Therefore, more time must be allocated to the information transmission phase in order to improve the system throughput.

\begin{figure}[t]
\vspace{-1em}
\centering
\includegraphics[width=85mm, height=70mm]{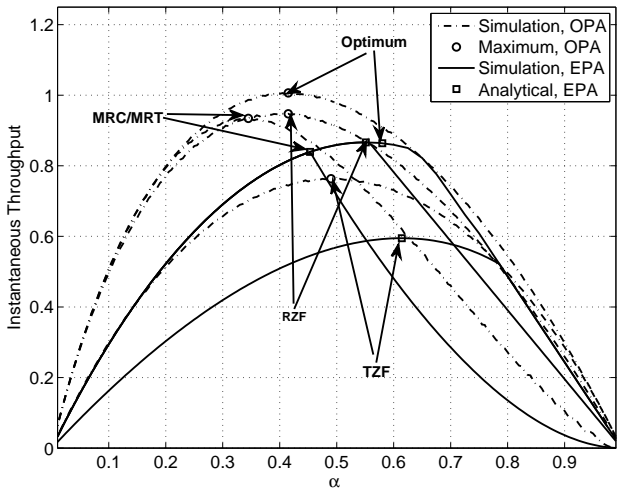}
\vspace{-1em}
\caption{Instantaneous throughput versus $\alpha$ for proposed beamforming schemes ($M_T=M_R=3$, $P_S=20$ dBm, $d_1=20$, $d_2=10$ and $\tau=3$). }
\label{fig: Instantanous-throu-v-alpha}
\vspace{-1em}
\end{figure}
\begin{figure}[t]
\centering
\includegraphics[width=85mm, height=66mm]{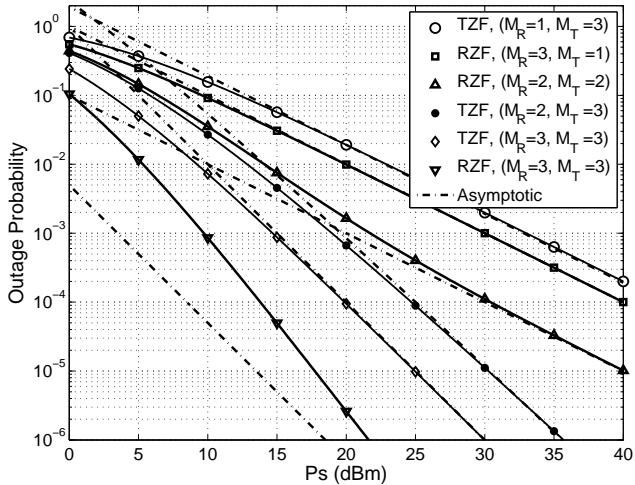}
\vspace{-0.9em}
\caption{Outage probability versus $P_S$ of the TZF and RZF schemes for different antenna configurations.}
\label{fig: Outage probability_ZF}
\vspace{-1.0em}
\end{figure}

\begin{figure}[t]
\centering
\vspace{0.2em}
\includegraphics[width=85mm, height=66mm]{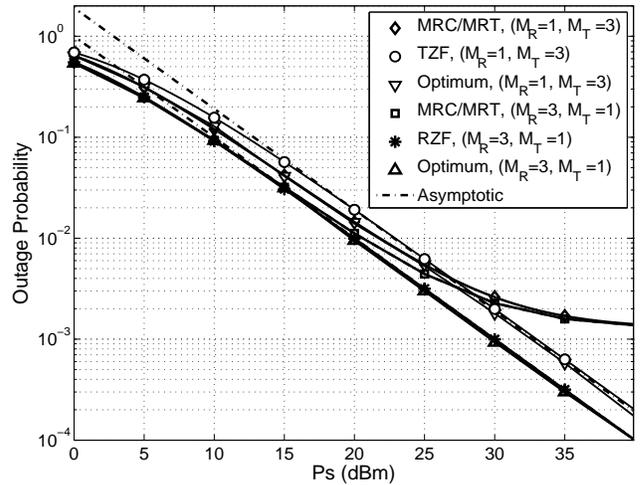}
\vspace{-0.9em}
\caption{Outage probability versus $P_S$ of the optimum, TZF, RZF, MRC/MRT schemes for different antenna configurations.}
\label{fig: Outage probability_MRC_ZF}
\vspace{-0.8em}
\end{figure}
\subsection{Outage Probability}
Fig.~\ref{fig: Outage probability_ZF} shows the outage probability for the ZF based beamforming schemes with different antenna configurations and for a specific $\alpha=0.5$ and $d_1=d_2 =10$ m. The asymptotic results based on~\eqref{eq:cdf of gammaTZF: Asymptotic} and~\eqref{eq:cdf of gammaRZF: Asymptotic} are also presented. Since the relay is capable of canceling LI, we see that the outage probability of the TZF and RZF beamforming schemes decays proportionally to the diversity orders reported in Proposition~\ref{Propos:hsnr:TZF} and~\ref{Propos:hsnr:RZF}, respectively.
Comparing the TZF and RZF schemes with the same diversity orders and different receive antenna numbers (i.e., TZF, with $M_R =1$ and $M_T=3$, and RZF  with $M_R =3$ and $M_T=1$) we see that the additional receive antenna could harvest more energy to increase the second-hop SNR and to facilitate information transfer. Moreover, for the case where $M_R=M_T$, RZF achieves a higher array gain. This observation demonstrates that while under some configurations ($M_T = 1$) or ($M_R = 1$) only one form (receive or transmit) of beamforming design can be implemented, when both designs can be applied, the system designer has to carefully decide on the configuration as well as the beamforming design.

Fig.~\ref{fig: Outage probability_MRC_ZF} compares the outage probability of the optimum, TZF, RZF, and MRC/MRT schemes with different antenna configurations and for $\alpha=0.5$ and $d_1=d_2 =10$ m. The residual LI strength at the relay is set to be $-50$ dBm. Asymptotic results in~\eqref{eq:lower bound for cdf of gammaMRCMRT_Case1} and~\eqref{eq:cdf of gammaMRCMRT_Case2 series approx compact} are also provided for the MRC/MRT scheme. The outage performance of the MRC/MRT scheme is almost identical to the optimum scheme at low SNRs, while the ZF-based schemes can achieve almost the same performance of the optimum scheme in the high SNR regime. When the transmit power increases and $\alpha$ remains fixed, an excessive amount of energy will be collected at the relay, which is detrimental for the MRC/MRT scheme since it results in a strong LI effect. Therefore, the outage probability of the MRC/MRT scheme shows an outage floor at high SNRs. However, to some level a proper choice of $\alpha$ can improve the outage performance of the MRC/MRT scheme. Fig.~\ref{fig: Outage probability_ZF} and~\ref{fig: Outage probability_MRC_ZF} illustrate that the outage probability of the system depends on the amount of energy harvested through the receive antenna numbers at the relay, source-relay link distance, and the energy conversion efficiency of the deployed energy harvester at the relay.  Moreover, it is significantly influenced by the transmit/receive beamforming design at the relay.

\begin{figure}[t]
\centering
\includegraphics[width=85mm, height=68mm]{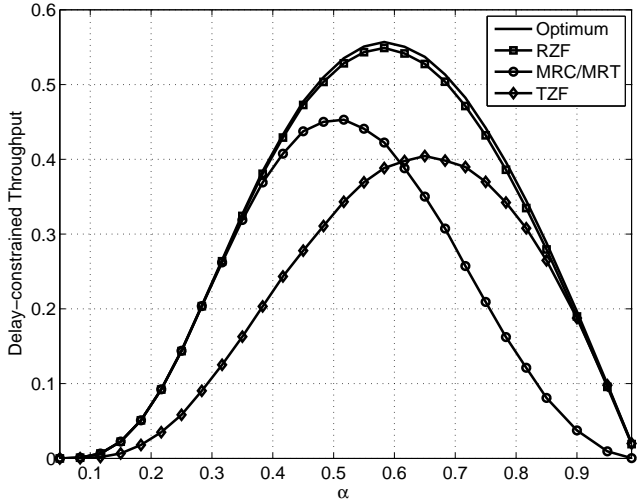}
\vspace{-0.7em}
\caption{Delay-constrained throughput of different FD schemes ($M_T=M_R=3$, $P_S=10$ dBm, $d_1=20$, $d_2=10$, $\tau=3$, $\eta=0.5$, and $R_c=2$).}
\label{fig: Delay-constraint-throu}
\vspace{-1.1em}
\end{figure}
\subsection{Delay-Constrained Throughput}
Fig.~\ref{fig: Delay-constraint-throu} shows the impact of optimal $\alpha$ on the delay-constrained throughput. As expected, the optimum scheme exhibits the best throughput among all beamforming schemes. The superior performance of the optimum scheme is more pronounced especially between $0.4$ and $0.8$ values of $\alpha$. The highest throughput with optimized $\alpha$ for the optimum, RZF, MRC/MRT and TZF schemes are given by $0.557$, $0.549$, $0.453$ and $0.404$, respectively. Moreover, we see that each one of the TZF, RZF and MRC/MRT schemes can surpass other beamforming schemes depending on the value of $\alpha$. This observation reveals the existence of various design choices when performance-complexity tradeoff is considered.

Fig.~\ref{fig: Delay-constraint-throu_v_LI} shows the effect of the LI strength on
the delay-constrained throughput when optimum $\alpha$ is used. The total number of antennas is $M_R+M_T=4$. The ZF-based schemes do not suffer from LI, therefore the delay-constrained transmission throughput remains constant. On the contrary, as expected, the delay-constrained transmission throughput of the MRC/MRT scheme decreases as $\Sap$ and consequently the LI strength increases. When the LI strength is low  the ZF-based schemes become inferior as compared to the MRC/MRT scheme. In this region the combination ($M_R=2$, $M_T=2$) with MRC/MRT scheme
exhibits a near optimum performance. Results, not shown for the configuration ($M_R=3$, $M_T=1$),  showed inferior performance as compared to the configuration ($M_R=2$, $M_T=2$) in all schemes while they showed a superior performance as compared to the configuration ($M_R=1$, $M_T=3$). Therefore, performance enhancements can be achieved through equal transmit and receive antenna deployment.

As a final cautionary note we would like to express that the above main findings and insights can be further examined by applying specific RF circuitry and power amplifier model. For example, low efficiency of a RF amplifier used for digital communications can be accounted in a detailed analysis. An implication of such use would be that, often the relay will be unable to operate and the price paid would be a high outage probability due to communication blackout periods. Nevertheless, our results provide useful theoretical performance bounds for the studied system and motivates practical interest from the perspective of wireless-powered full-duplex system implementation.
\vspace{-0.5em}
\section{Conclusion}\label{sec:conclusion}
In this paper, we studied the instantaneous and delay-constrained throughput of a wireless-powered FD MIMO relay system. We designed optimum linear processing at the relay as well as investigated several suboptimum schemes. The optimal time-split for instantaneous throughput maximization of all schemes was derived. We also presented exact and asymptotic closed-form expressions for the outage probability useful to characterize the delay-constrained throughput. We found that the MRC/MRT scheme can offer a higher instantaneous/delay-constrained throughput as compared to the RZF and TZF schemes, when the LI is significantly canceled, and vice versa. The MRC/MRT scheme can provide a better outage performance at low-to-medium SNRs, while the ZF precoders outperform the former at high SNRs.
\begin{figure}[t]
\centering
\vspace{-1.0em}
\includegraphics[width=85mm, height=71mm]{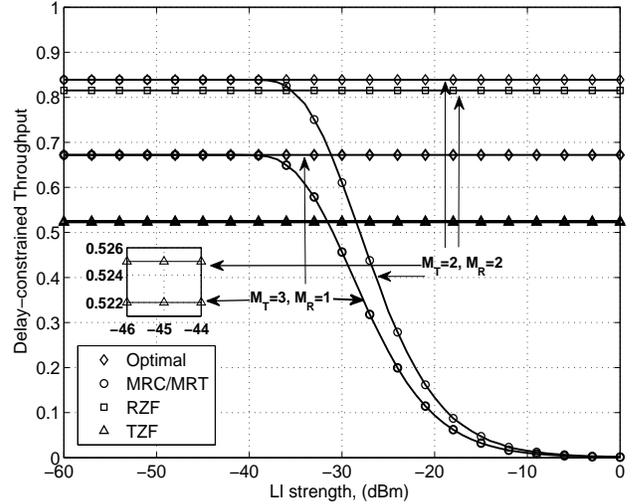}
\vspace{-0.6em}
\caption{Delay-constrained throughput versus the LI strength for proposed beamforming schemes with optimal $\alpha$ ($P_S=20$ dBm, $d_1=20$, and $d_2=10$).}
\label{fig: Delay-constraint-throu_v_LI}
\vspace{-1.0em}
\end{figure}
\appendices
\vspace{-1em}
\section{Proof of Proposition ~\ref{ProofPropositionRankOne}}
\label{Appendix:ProofPropositionRankOne}
\begin{proof}
We show that \eqref{eqn: nonconvex quadratic optimization 2} leads to optimum rank-one solution or such solution can be recovered from the optimum solution that has rank higher than one.
First note that instead of solving~\eqref{eq:opt-prec2} for a given $t$,~\eqref{eqn: nonconvex quadratic optimization 2} can be solved for a given $y={\bar y}$. Such optimization problem is expressed as
\vspace{-0.1em}
\bea\label{eq:KKTProof1}
\max_{ \qW_t, t} &&\hspace{-1em} t \nonumber\\
\mbox{s.t.} && \hspace{-1em}
\frac{\Ssr}{\rho_1}t{\bar y}\leq  \|\qh_{SR}\|^2  {\bar y} - \frac{\kappa \rho_1}{\Ssr}\|\qh_{SR}\|^2 \qh_{SR}^\dag\qH_{RR}\qW_t\qH_{RR}^\dag\qh_{SR} \nonumber \\
&&\hspace{-1em}
 {\bar y}= \mathsf{tr} \left(\qW_t \left(\qI+ \frac{\kappa \rho_1}{\Ssr}\|\qh_{SR}\|^2 \qH_{RR}^{\dag}\qH_{RR}\right)\right)\nonumber\\
&&\hspace{-1em}
 t\leq \frac{\kappa\rho_2 }{\Ssr\Srd} \|\qh_{SR}\|^2\qh_{RD} \qW_t \qh_{RD}^{\dag}\nonumber\\
&&\hspace{-1em}
\mathsf{tr} \left(\qW_t\right)=1, \qW_t\succeq 0
\eea
where ${\bar y}\in[1, y_{\rm up}] $. Here,  $y_{\rm up}$ is the maximum value of ${\bar y}$ which is the maximum eigenvalue corresponding to $ \qR\triangleq \qI+ \frac{\kappa \rho_1}{\Ssr}\|\qh_{SR}\|^2 \qH_{RR}^{\dag}\qH_{RR}$. The optimum  $\qW_t$ is the solution that maximizes $t$ for
all $y\in[1, y_{\rm up}] $ for which \eqref{eq:KKTProof1} is feasible. The optimum $\qW_t$ thus obtained will be the same as that obtained by solving~\eqref{eq:opt-prec2}. The Lagrangian multiplier function for~\eqref{eq:KKTProof1}  is expressed as
\vspace{-0.3em}
\begin{align}
\label{eq:KKTProof2}
{\mathcal L}\left(\qW_t, t, \left\{\lambda_i\right\}_{i=1}^{4}\right)&=-t+\lambda_1\left(\frac{\Ssr}{\rho_1}  t{\bar y}-\|\qh_{SR}\|^2  {\bar y}  \right.\nonumber\\
&\left.\hspace{-1em}+\frac{\kappa \rho_1}{\Ssr}\|\qh_{SR}\|^2 \qh_{SR}^\dag\qH_{RR}\qW_t\qH_{RR}^\dag\qh_{SR} \right)\nonumber\\
&  \hspace{-1em}+\lambda_2\left( {\rm tr}\left(\qW_t\qR\right)-{\bar y}\right)\nonumber\\
& \hspace{-1em}+ \lambda_3\left( t-\frac{\kappa \rho_2}{\Ssr\Srd} \|\qh_{SR}\|^2\qh_{RD} \qW_t \qh_{RD}^{\dag}  \right)\nonumber\\
& \hspace{-1em}+\lambda_4\left({\rm tr}(\qW_t\!-1)\right)\!-{\rm tr}\left( \qY \qW_t\right)\!,
\end{align}
where ${\bf Y}\succeq 0$ is the matrix dual variable associated with the constraint $\qW_t\succeq 0$. Since \eqref{eq:KKTProof1} is convex and Slater condition holds true, the Karush-Kuhn-Tucker (KKT) conditions are necessary and sufficient for optimality.
The KKT conditions for \eqref{eq:KKTProof1} are
\begin{eqnarray}
\label{eq:KKTProof3A}
\frac{\partial {\mathcal L}}{\partial \qW_t}&=0&\rightarrow {\bf Y}=\lambda_1  \frac{\kappa \rho_1}{\Ssr}\|\qh_{SR}\|^2 \qH_{RR}^\dag\qh_{SR} \qh_{SR}^\dag\qH_{RR}\nonumber\\
&&\hspace{-4em}+\lambda_2\qR+\lambda_4\qI-\lambda_3  \frac{\kappa \rho_2}{\Ssr\Srd} \|\qh_{SR}\|^2\qh_{RD}^{\dag} \qh_{RD} \\
\label{eq:KKTProof3B}
\frac{\partial {\mathcal L}}{\partial t}&=0&\rightarrow \lambda_1\frac{\Ssr}{\rho_1}{\bar y}+\lambda_3=1 \\
\label{eq:KKTProof3C}
& &\hspace{-4em} \lambda_1\left(\frac{\Ssr}{\rho_1}t{\bar y}-\|\qh_{SR}\|^2  {\bar y} +\frac{\kappa \rho_1}{\Ssr}\right.\nonumber\\
&&\hspace{-4em}\times\left\|\qh_{SR}\|^2 \qh_{SR}^\dag\qH_{RR}\qW_t\qH_{RR}^\dag\qh_{SR} \right)=0\\
\label{eq:KKTProof3D}
& &\hspace{-4em} \lambda_3\left( t\!-\!\frac{\kappa\rho_2 }{\Ssr\Srd} \|\qh_{SR}\|^2\qh_{RD} \qW_t \qh_{RD}^{\dag}  \right)\!=\!0\\
\label{eq:KKTProof3E}
&& \hspace{-4em}{\rm tr}\left(\qW_t\qR\right)-{\bar y}=0\\
\label{eq:KKTProof3F}
&&\hspace{-4em}{\rm tr}\left(\qW_t\right)-1=0\\
\label{eq:KKTProof3G}
& &\hspace{-4em} {\rm tr}(\qY \qW_t)=0 \rightarrow \qY \qW_t={\bf 0}.
\end{eqnarray}
The complementary slackness condition \eqref{eq:KKTProof3G} means that at KKT optimality, the optimum $\qW_t$ lies in the null-space of $\qY$. This means that the rank of   $\qW_t$ is the nullity of $\qY$.
In the following, we analyze the cases in which the optimum $\qW_t$ is rank-one (Cases a and c) and  not rank-one but the rank-one optimum solution can be recovered from optimum  $\qW_t$ (Case b).
\begin{itemize}
\item Case a: $\lambda_1=0, \lambda_3\neq 0$ - When $\lambda_1=0$, the  inequality constraint  corresponding to \eqref{eq:KKTProof3C} is not satisfied with equality, whereas the  inequality constraint corresponding to  \eqref{eq:KKTProof3D} is satisfied with equality since $\lambda_1=0$ leads to $\lambda_3=1$ (see  \eqref{eq:KKTProof3B}). In this case, $\qY$ reduces to
    \vspace{-0.3em}
    \begin{eqnarray}
\label{eq:KKTProof4}
\qY&=(\lambda_2+\lambda_4)\qI+\lambda_2  \frac{\kappa \rho_1}{\Ssr}\|\qh_{SR}\|^2 \qH_{RR}^{\dag}\qH_{RR}- \nonumber\\
&\frac{\kappa\rho_2 }{\Ssr\Srd} \|\qh_{SR}\|^2\qh_{RD}^{\dag} \qh_{RD}.
\end{eqnarray}
It is clear that   both $\lambda_2$ and $\lambda_4$ cannot be equal to zero at optimality. Otherwise,  $\qY$ turns to a negative semi-definite matrix contradicting the fact that $\qY\succeq 0$. In all other possible values of $\lambda_2 $ and $\lambda_4  $, it is seen that $\qZ\triangleq (\lambda_2+\lambda_4)\qI+\lambda_2  \frac{\kappa \rho_1}{\Ssr}\|\qh_{SR}\|^2 \qH_{RR}^{\dag}\qH_{RR}$ is a full-rank matrix. Now, we can show that the nullity of ${ \bf Y}$ cannot be greater than one by contradiction. Assume that $\left\{ \qu_{y,q}, q=1,2\right\} \in {\mathcal Ns  }({ \bf Y}) $ where  ${\mathcal Ns  }({ \bf Y})$ denotes null-space of ${ \bf Y}$ and let  $\qa\qa^{\dag}=\frac{\kappa\rho_2 }{\Ssr\Srd} \|\qh_{SR}\|^2\qh_{RD}^{\dag} \qh_{RD}$. Then,
\vspace{-0.5em}
\begin{align}
\label{eq:KKTProof5}
{ \bf Y}\qu_{y,q}&=\qZ\qu_{y,q} -\qa\qa^{\dag}\qu_{y,q}\nonumber\\
&\rightarrow \qu_{y,q}= \qZ^{ -1}\qa \qa^\dag\qu_{y,q}, \forall q,
\end{align}
which shows that $\qu_{y,q}$ is an eigenvector of $\qZ^{ -1}\qa \qa^\dag$ corresponding to eigenvalue $1$. Since ${ \rm rank }(\qZ^{ -1}\qa \qa^\dag)=1$, it turns out that $q$ cannot take a value greater than $1$. This shows that the dimension of null space of $\qY$ is 1, and therefore, the rank of $\qW_t$ is  one.

\item Case b:  $\lambda_1\neq 0, \lambda_3= 0$ - When optimum $\lambda_3=0$, the inequality constraint associated with  \eqref{eq:KKTProof3C} will be satisfied with equality, whereas that associated with \eqref{eq:KKTProof3D} will not be satisfied with equality.
Note that $\lambda_3=0$ leads to $\lambda_1=\frac{\rho_1}{\Ssr}\frac{1}{\bar y}$. In this case, $\qY$ reduces to
\vspace{-0.3em}
\bea
\label{eq:KKTProof6}
\qY&=\frac{\rho_1}{\Ssr}\frac{1}{\bar y}\frac{\kappa \rho_1}{\Ssr}\|\qh_{SR}\|^2 \qH_{RR}^\dag\qh_{SR} \qh_{SR}^\dag\qH_{RR}+ \nonumber\\
&(\lambda_2+\lambda_4)\qI+\lambda_2  \frac{\kappa \rho_1}{\Ssr}\|\qh_{SR}\|^2 \qH_{RR}^{\dag}\qH_{RR}.
\eea
Note that a feasible $\qY$ is  the one which has at least a nullity of 1, since the optimum $\qW_t$ lies in the null-space of $\qY$. Therefore, at optimality both $\lambda_2$  and $\lambda_4$ should be zero, otherwise $\qY$ in \eqref{eq:KKTProof6}  turns to a full-rank matrix which is not feasible.  Consequently,  $\qY$  reduces to $\qY= \frac{\rho_1}{\Ssr}\frac{1}{\bar y}\frac{\kappa \rho_1}{\Ssr}\|\qh_{SR}\|^2 \qH_{RR}^\dag\qh_{SR} \qh_{SR}^\dag\qH_{RR}$ which is a rank-one matrix. Therefore, the optimum $\qW_t$ may not be rank-one. However, we show that optimum rank-one matrix can be recovered from optimum $\qW_t$ without loss of optimality.

\hspace*{0.3cm} Suppose  the optimum $\qW_t$ has a rank $r$ where $\qW_t=\sum_{q=1}^{r}\sigma_q\qu_q\qu_q^\dag$. $\sigma_q$  and $\qu_q, (q=1,\cdots, r)$ are,  respectively, the eigenvalues and eigenvectors of the matrix $\qW_t$. Furthermore, due to the equality constraint  \eqref{eq:KKTProof3F}, $\sum_{q=1}^{r}\sigma_q=1$. Substituting eigenvalue decomposition of $\qW_t$  into the condition \eqref{eq:KKTProof3G},  we find that
\vspace{-0.3em}
\begin{align}
\label{eq:KKTProof7}
& \sum_{q=1}^{r}\sigma_q\qu_q^\dag {\bf Y} \qu_q=0\nonumber\\
&\rightarrow \qu_q^\dag \left[ \qH_{RR}^\dag\qh_{SR} \qh_{SR}^\dag\qH_{RR} \right] \qu_q=0, \forall q
\end{align}
where the last step is due to the fact that $\qY\succeq 0$. Moreover, following is due to (\ref{eq:KKTProof7})
\vspace{-0.3em}
\begin{align}
\label{eq:KKTProof8}
\frac{\Ssr}{\rho_1}t&= \|\qh_{SR}\|^2 -\frac{1}{\bar y}\frac{\kappa \rho_1}{\Ssr}\|\qh_{SR}\|^2 \qh_{SR}^\dag\qH_{RR}\qW_t\qH_{RR}^\dag\qh_{SR}\nonumber\\
&\rightarrow t= \frac{\rho_1}{\Ssr}\|\qh_{SR}\|^2.
\end{align}
On the other hand, using the eigenvalue decomposition of $\qW_t$, the inequality constraint associated with  \eqref{eq:KKTProof3D} and equality constraint  \eqref{eq:KKTProof3E}, respectively, yield
\vspace{-0.5em}
\begin{align}
\label{eq:KKTProof9}
& t< \frac{\kappa \rho_2 }{\Ssr\Srd} \|\qh_{SR}\|^2\sum_{q=1}^{r}\sigma_q \qu_q^\dag \qh_{RD}^{\dag} \qh_{RD} \qu_q\\
\label{eq:KKTProof10}
&1+ \frac{\kappa \rho_1}{\Srd} \|\qh_{SR}\|^2 \sum_{q=1}^{r}\sigma_q \qu_q^\dag \qH_{RR}^\dag \qH_{RR} \qu_q={\bar y}.
\end{align}
It  can be observed from \eqref{eq:KKTProof8} that the optimum value of $t$ does not depend on ${\bar y}$ in the underlying case. An arbitrary ${\bar y}$ where ${\bar y}>1$ (see  \eqref{eq:KKTProof10} ) remains optimum.  Now we can show that by choosing a particular $\qu_q$ from a set $\left\{ \qu_q\right\}_{q=1}^{r}$ and a specific value of $\sigma_q$,  (\ref{eq:KKTProof9}) is not violated. Consider that ${\hat q}=\max_{q}\qu_q^\dag \qh_{RD}^{\dag} \qh_{RD} \qu_q$. Then, the maximum of   (\ref{eq:KKTProof9}) is achieved by choosing $\qu_{\hat q}$ with $\sigma_{\hat q}=1$. This choice does not affect the optimum objective value which remains \eqref{eq:KKTProof8} . As such, the optimum rank-one matrix recovered from $\qW_t$ turns to  $\sigma_{\hat q}\qu_{\hat q}\qu_{\hat q}^\dag$.

 \item Case c:  $\lambda_1\neq 0, \lambda_3\neq 0$ - In this case, both the inequality constraints ( \eqref{eq:KKTProof3C} and  \eqref{eq:KKTProof3D}) will be satisfied with equality. As such, $\qY$ is given by (\ref{eq:KKTProof3A}). As long as at least one of  $\lambda_2$  and $\lambda_4$ is non-zero, $\lambda_1  \frac{\kappa \rho_1}{\Ssr}\|\qh_{SR}\|^2 \qH_{RR}^\dag\qh_{SR} \qh_{SR}^\dag\qH_{RR}+\lambda_2\qR+\lambda_4\qI$ is a full-rank matrix. On the other hand, $\lambda_3  \frac{\kappa \rho_2}{\Ssr\Srd} \|\qh_{SR}\|^2\qh_{RD}^{\dag} \qh_{RD}$ is a rank-one matrix. Consequently, as in Case a, the optimum $\qW_t$ can be shown to be a rank-one matrix. Therefore, it is sufficient to show that both   $\lambda_2$  and $\lambda_4$ cannot be zero at the optimality. Towards this end, we use the method of contradiction. Assume that  $\lambda_2=0$  and $\lambda_4=0$. Then, $\qY$ reduces to
  \bea
  \vspace{-0.2em}
\label{eq:KKTProof11}
\qY&=\lambda_1  \frac{\kappa \rho_1}{\Ssr}\|\qh_{SR}\|^2 \qH_{RR}^\dag\qh_{SR} \qh_{SR}^\dag\qH_{RR}\nonumber\\
&-\lambda_3  \frac{\kappa \rho_2 }{\Ssr\Srd} \|\qh_{SR}\|^2\qh_{RD}^{\dag} \qh_{RD}.
\eea
Note that $\qY$ should be positive-semidefinite. Since $\qY$ in \eqref{eq:KKTProof11}  is the difference between two rank-one matrices, it can be readily shown from Weyl's inequalities for eigenvalues of sum of Hermitian matrices that  $\qY$ cannot remain positive-semidefinite except
in the case with $\lambda_3  \frac{\kappa \rho_2}{\Ssr\Srd} \|\qh_{SR}\|^2\qh_{RD}^{\dag} \qh_{RD}=0$, i.e., $\lambda_3=0$ for non-zero $\qh_{SR}$ and $\qh_{RD}$. This contradicts with the assumption $\lambda_3 \neq 0$, which consequently contradicts the assumption that both $\lambda_2$ and $\lambda_4$ are zero. This completes the proof of the proposition.
\end{itemize}\vspace{-1.2em}
\end{proof}
\vspace{-0.4em}
\section{Proof of Proposition~\ref{Propos:hsnr:TZF}}
\label{propos:Apx:TZF:highSNR}
Applying the series expansion of $\gamma(a,x)$ and $\Gamma(a,x)$ ~\cite[Eq. (8.354.1) and (8.354.2)]{Integral:Series:Ryzhik:1992} we have
\vspace{-0.35em}
\begin{align}\label{eq:cdf of gammaTZF high SNR}
F_{\gamma_{\TZF}} (z)
  &=
  1\! -\! \frac{1}{\Gamma(M_R)}
  \int_{\frac{ d_1^{\tau}z }{\rho_1}}^{\infty}
  \left(1\! -\! \frac{1}{\Gamma(M_T\!-\!1)}
  \right.\nonumber\\
  &\hspace{-2em}\left.
  \times\sum_{k=0}^{\infty}
  \frac{(-1)^k}{k!(k\!+\!M_T)}\left(\frac{\Ssr\Srd}{\kappa\rho_2}\frac{z}{x}\right)^{M_T\!+k-1}\right) x^{M_R-1}e^{-x} dx,\!\nonumber\\
  &=1\!-\!\frac{\Gamma\left(M_R, \frac{ d_1^{\tau}z }{\rho_1}\right)}{\Gamma(M_R)} +
  \frac{1}{\Gamma(M_T-1)\Gamma(M_R)}
  \nonumber\\
  &\hspace{-2em}
  \times
  \sum_{k=0}^{\infty}\frac{(-1)^k}{k!(k+M_T)}
   \left(\frac{\Ssr\Srd}{\kappa} \frac{z}{\rho_2}\right)^{M_T+k-1} \mathcal{I}(k),
\end{align}
where $\mathcal{I}(k)=\int_{\frac{ d_1^{\tau}z }{\rho_1}}^{\infty} x^{M_R - M_T-k} e^{-x} dx$, that has closed form solution given by
\vspace{-0.35em}
\begin{align}\label{eq:integral high SNR TZF}
\mathcal{I}(k)
& =
 \begin{cases}
 \!\!\left(\bGsd\right)^{M_R - M_T-k+1}\!\!E_{M_T - M_R+k} \left(\bGsd\right),  &\\
 \hspace{8em}
  M_T > M_R-k\\
 \!\!\left(\bGsd\right)^{M_R - M_T-k+1}\!\!\alpha_{M_R - M_T-k} \left(\bGsd\right),  &\\
  \hspace{8em}
  M_T \leq M_R-k
 \end{cases}
\end{align}
where $\alpha_{n}(x) =\int_{1}^{\infty} x^{n} e^{-x} dx$~\cite[Eq. (5.1.5)]{Abramowitz_Handbook_1970}. By substituting~\eqref{eq:integral high SNR TZF} into~\eqref{eq:cdf of gammaTZF high SNR}, and then applying~\cite[Eq. (5.1.8)]{Abramowitz_Handbook_1970} and~\cite[Eq. (5.1.12)]{Abramowitz_Handbook_1970}, we get~\eqref{eq:integral high SNR TZF final} at the top of the next page where $\psi(1)=-0.57721...$ and $\psi(n) = \psi(1) + \sum_{m=1}^{n-1} \frac{1}{m},$ for $n>1$~\cite[Eq. (9.73)]{Integral:Series:Ryzhik:1992}. In the high SNR regime, i.e., $\rho_1, \rho_2 \rightarrow \infty$, omitting higher order items of
the series expansion in~\eqref{eq:integral high SNR TZF final}, the desired result follows
after some simple algebraic manipulations.
\bigformulatop{86}{ \vskip-0.5cm
\begin{align}\label{eq:integral high SNR TZF final}
F_{\gamma_{\TZF}} (z)
&\!=\!
1\!-\!\frac{\Gamma\left(M_R, \frac{ d_1^{\tau}z }{\rho_1}\right)}{\Gamma(M_R)} \!+\!
  \frac{1}{\Gamma(M_T\!-\!1)\Gamma(M_R)}
  \left(\bGsd\right)^{M_R}\!\sum_{k=0}^{\infty}\frac{(-1)^k}{k!(k+M_T)}
  \left(\frac{\SnD}{\SnR}\frac{\Srd}{\kappa}  \right)^{M_T+k-1}
 \nonumber\\
 &
 \times\begin{cases}
 \Big(\frac{(-1)^{M_T \!-\! M_R+k-1}}{\Gamma(M_T -\! M_R +k)}
 \left(\bGsd\right)^{M_T \!-\! M_R +k -1}
 \!\!\left(-\ln\left(\bGsd\right)\!+\! \psi(M_T\!-\!M_R\!+\!k)\right)\\
 \quad
 -\sum\limits_{\substack{\ell=0 \\
       \ell\neq M_T - M_R +k-1}}^{\infty}
 \frac{(-1)^{\ell}}{\ell +M_R - M_T-k +1}
 \left(\bGsd\right)^{\ell}\Big),  & M_T > M_R-k\\
 \Gamma(M_R - M_T -k+ 1)\left(\bGsd\right)^{M_T \!-\! M_R-k-1}\\
 \quad
 \times e^{-\bGsd}
 \left( 1 \!+\! \bGsd \!+\! \frac{1}{2!}\left(\bGsd\right)^2 \!+\! \cdots+\frac{1}{(M_R\!-M_T-\!k)!}\left(\bGsd\right)^{M_R-\!M_T-\!k}\right),  & M_T \!\leq\! M_R\!-\!k
 \end{cases}
\end{align}
}
\section{Proof of Proposition~\ref{Propos:hsnr:RZF}}
\label{proof:Apx:RZF:highSNR}
\hspace*{-0.3cm} Using the series expansion of $\gamma(a,x)$ and $\Gamma(a,x)$, we get
\setcounter{equation}{87}
\begin{align}\label{eq:cdf of gammaRZF high SNR}
F_{\gamma_{\RZF}}(z)
& =
1 -
\frac{\Gamma\left(M_R, \bGsd\right)}{\Gamma(M_R)}+
\frac{1}{\Gamma(M_R)\Gamma(M_T)}
\nonumber\\
  &\hspace{-1em}
  \sum_{k=0}^{\infty}
\frac{(-1)^k}{k!(k+M_T)}
   \left(\frac{\Ssr\Srd}{\kappa} \frac{z}{\rho_2} \right)^{M_T+k}
   \mathcal{I}_1(k)+
   \nonumber\\
  &\hspace{-1em}
\frac{1}{\Gamma(M_R)}\left(\bGsd\right)^{M_R-1}
\left( e^{-\bGsd}  -
\frac{1}{\Gamma(M_T)}
\right.\nonumber\\
  &\hspace{-1em}\left.
\sum_{k=0}^{\infty}\frac{(-1)^k}{k!(k+M_T)}
   \left(\frac{\Ssr\Srd}{\kappa} \frac{z}{\rho_2} \right)^{M_T+k}
   \mathcal{I}_2(k)\right),
\end{align}
where $\mathcal{I}_1(k) =\int_{\bGsd}^{\infty} x^{M_R-M_T-k-1}e^{-x}dx$ is evaluated as
\begin{align}\label{eq:integral I1}
\mathcal{I}_1(k) =
 \begin{cases}
 \!\!\left(\bGsd\right)^{M_R - M_T-k}\!\!E_{M_T - M_R+k+1} \left(\bGsd\right)\!,  & \\
 \hspace{7em}M_T > M_R-k-1 \\
 \!\!\left(\bGsd\right)^{M_R - M_T-k}\!\!\alpha_{M_R - M_T-k-1} \left(\bGsd\right)\!,  &\\
 \hspace{7em} M_T \leq M_R-k-1
 \end{cases}
\end{align}
and $\mathcal{I}_2(k) =\int_{\bGsd}^{\infty} x^{-M_T-k}e^{-x}dx$ is solved as
\begin{align}\label{eq:integral I2}
\mathcal{I}_2 (k)=
 \left(\bGsd\right)^{-M_T -k +1}E_{M_T +k} \left(\bGsd\right).
\end{align}

Substituting~\eqref{eq:integral I1} and~\eqref{eq:integral I2} into~\eqref{eq:cdf of gammaRZF high SNR}, and expanding the result with the help of \cite[Eq. (5.1.8) and Eq. (5.1.12)]{Abramowitz_Handbook_1970}, and then only selecting the sufficient and ignoring the higher order terms, we arrive at~\eqref{eq:cdf of gammaRZF: Asymptotic}.

\section{Proof of Proposition~\ref{Propos:hsnr:MRC1}}
\label{proof:Apx:MRC1:highSNR}
Let $Z_3=\parallel\hSR\parallel^2 |h_{RD}|^2$, and observing that $X_2$ and $Z_3$ are independent, we have
\begin{align}\label{eq:cdf of gammaMRCMRT_first step}
F_{\gamma_{\MRC}^{\mathsf{low}}}(z)\! =\! 1\! -\! F_{X_2}\left(\!\frac{1}{\kappa \Sap }\frac{1}{z}\right)\!
\left(1 \!-\! F_{Z_3}\left(\!\frac{\Ssr\Srd}{\kappa \rho_2 } z\!\right)\!\right),
\end{align}
where $F_{X_2}(\cdot)$ is given in~\eqref{eq:cdf of Y:chi2 times beta} and the cdf of $Z_3$ can be derived as
\begin{align}\label{eq:cdf for the product of two chi-squared}
F_{Z_3}(z)  &= \int_{0}^{\infty} \Prob\left(|h_{RD}|^2<\frac{z}{x}\right)f_{\parallel\hSR\parallel^2}(x) dx,\nonumber\\
&=1- \frac{2 }{\Gamma(M_R)} z^{\frac{M_R}{2}} K_{M_R}(2\sqrt{z}),
\end{align}
where we used~\cite[Eq. (3.471.9)]{Integral:Series:Ryzhik:1992} to derive~\eqref{eq:cdf for the product of two chi-squared}.
Finally, substituting~\eqref{eq:cdf of Y:chi2 times beta} and~\eqref{eq:cdf for the product of two chi-squared} into~\eqref{eq:cdf of gammaMRCMRT_first step} yields the desired result.

\balance
\bibliographystyle{IEEEtran}


\end{document}